\documentclass[lettersize,journal]{IEEEtran}
\usepackage{amsthm}
\usepackage{amsmath,amsfonts}
\usepackage{algorithmic}
\usepackage{algorithm}
\usepackage{amssymb}
\usepackage{array}
\usepackage[caption=false,font=normalsize,labelfont=sf,textfont=sf]{subfig}
\usepackage{textcomp}
\usepackage{stfloats}
\usepackage{url}
\usepackage{verbatim}
\usepackage{graphicx}
\usepackage{cite}
\usepackage{cases}
\usepackage{hyperref}
\usepackage{bm}
\usepackage{color}
\usepackage{gensymb}
\usepackage{multirow}
\usepackage{booktabs}
\usepackage{makecell}
\usepackage{threeparttable}
\usepackage{utfsym}
\DeclareMathOperator*{\argmax}{arg\,max}

\hypersetup{colorlinks=true,
	linkcolor=blue,
	citecolor=blue,      
	urlcolor=black,
}

\newtheoremstyle{mylemma}
{}{}                 
{\normalfont}        
{}                   
{\bfseries}         
{.}                  
{ }                  
{\thmname{#1}\thmnumber{ #2}\thmnote{ (#3)}} 

\theoremstyle{mylemma}

\newtheorem{remark}{Remark}

\newtheorem{proposition}{Proposition}

\begin{document}

\title{Fast Beam-Brainstorm: Few-Step Generative Site-Specific Beamforming with Flexible Probing}

\author{Zihao Zhou,~\IEEEmembership{Graduate Student Member,~IEEE}, Zhaolin Wang,~\IEEEmembership{Member,~IEEE}, and Yuanwei Liu,~\IEEEmembership{Fellow,~IEEE}
\thanks{The authors are with the Department of Electrical and Computer Engineering, The University of Hong Kong, Hong Kong (e-mail: eezihaozhou@connect.hku.hk,zhaolin.wang@hku.hk,yuanwei@hku.hk)}}

\maketitle

\begin{abstract}
A novel generative site-specific beamforming (GenSSBF) approach, termed fast beam-brainstorm (FBBS), is proposed to address two practical bottlenecks in GenSSBF, namely, slow beam generation and fixed channel probing budgets. To accelerate beam generation, FBBS employs a two-stage training strategy that learns an average velocity field, instead of an instantaneous one, to guide the beam generative process, reducing the generation steps to few or even one. To support flexible probing, a stochastic masking mechanism and a beam index-aware masked condition encoder are proposed, enabling a single trained model to handle variable-length channel observations, i.e., reference signal received power (RSRP), without retraining. Therefore, FBBS achieves the fast generation of high-fidelity communication beams from coarse and variable-length channel probing feedback. Simulations on ray-tracing datasets confirm that FBBS achieves a favorable trade-off between beamforming gain, probing overhead and beam generation speed. Moreover, FBBS demonstrates robustness to noisy and outdated RSRP measurements, and meaningful cross-scenario zero-shot generalization.
\end{abstract}

\begin{IEEEkeywords}
Beamforming, generative AI, site-specific learning, 6G.
\end{IEEEkeywords}
\vspace{-1.2em}
\section{Introduction}
\IEEEPARstart{M}{ultiple}-antenna systems rely critically on beamforming to leverage spatial degrees of freedom (DoF) for enhanced coverage and capacity \cite{bjornson2023twenty}, particularly at high-frequency bands where directional gain is essential to combat severe propagation and penetration losses\cite{xiao2017millimeter, kutty2016beamforming}. While optimal beamforming theoretically requires perfect channel state information (CSI), real-time acquisition is fundamentally hampered by complexity, overhead, and mobility.

In the current fifth generation new radio (5G NR), beamforming relies on a ``probing-measuring-reporting'' framework over pre-defined codebooks (e.g., discrete Fourier transform (DFT) or oversampled-DFT (O-DFT) codebook). The base-station (BS) transmits reference signals (RSs) such as synchronization signal blocks (SSBs) and CSI-RSs, and the UE reports quality metrics or beam indices \cite{dahlman20185g}. Various sweeping methods build on this paradigm \cite{he2015suboptimal,xiao2016hierarchical,qi2020hierarchical,chen2018dynamic}. While robust, simple, and applicable to both time-division duplexing (TDD) and frequency-division duplexing (FDD) without full CSI, this exhaustive-search approach suffers three fundamental limitations. 1) The codebooks are designed via mathematical orthogonality rather than adapting to the specific propagation characteristics of the deployment site (e.g., dominant angle distributions and scatterer geometries). 2) Single-lobe DFT/O-DFT beams limit gain especially in non-line-of-sight (NLoS) scenarios, and the grid-mismatch problem persists \cite{abdallah2025explainable}. 3) The beam search space and the associated pilot overhead grow with array size, becoming prohibitive for XL-MIMO with hundreds to thousands of antennas \cite{liu2023near}.

The aforementioned limitations motivate \emph{site-specific beamforming (SSBF)}, which exploits prior knowledge of the propagation environment to reduce online sweeping burden\cite{heng2024site}. In recent years, deep learning (DL)-assisted SSBF has attracted extensive research attention. A typical solution is to parameterize probing codebooks via complex-valued neural networks (NNs), where beamforming vectors are instantiated by learned NN weights \cite{alrabeia2022neural}. Following this NN-as-codebook paradigm, an end-to-end learning framework is proposed to jointly optimize site-specific probing codebooks and beam selectors \cite{heng2022learning}. This framework is further extended to hierarchical beam alignment in \cite{yang2024hierarchical}. Beyond DL-based solutions, reinforcement learning (RL) was adopted to learn codebooks online where the agent receives reward signals from the environment to improve the beam pattern \cite{zhang2022reinforcement, abdallah2024multi}. Although a more ``meaningful'' codebook can be obtained via the aforementioned approaches, the ``grid mismatch'' issue still remains. Therefore, in \cite{heng2024grid}, a grid-free SSBF framework was proposed where the probing codebooks and a multi-layer perceptron (MLP)-based beam synthesizers were jointly trained. Similarly, the application of NN for beam synthesis is seen in \cite{abdallah2025explainable} for hybrid beamforming and in \cite{li2025site} for joint transmit- and receive beamforming in full-duplex system.

However, it is noticed that most existing SSBF pipelines are fundamentally using \emph{discriminative models}: they map input (i.e., coarse CSI measurements) to beam index or beamforming vector. Although discriminative models have proven effective in some domains, for SSBF, this paradigm presents the following limitations. First, conditional discriminative models often struggle to capture \textit{multimodal distributions} \cite{torralba2024foundations}. Specifically, in SSBF, such multimodality naturally arises under multipath propagation, blockage uncertainty, and noisy feedback, where different receiver positions can yield similar measurements (spatial ambiguity) \cite{wang2026generative, zhou2026beam}. From a learning perspective, standard point-estimation objectives tend to collapse this multimodality \cite{torralba2024foundations}. For example, under a minimum mean squared error (MMSE) objective, the predictor is driven toward the conditional mean, which may lie between several valid beam patterns and become physically suboptimal. Moreover, the existing DL-based SSBF solutions typically output a single prediction. When faced with ambiguous input, the model will still have one deterministic output during inference, which can lead to poor beam decisions and subsequent error propagation\cite{wang2026generative}. In addition, discriminative model-based SSBF essentially belongs to the category of ``\textit{unstructured beam prediction}''\cite{torralba2024foundations}, as element-wise regression objectives do not explicitly enforce the global phase-coherent structure of a valid beamforming vector. This limitation, however, fails to account for the fact that these elements are highly correlated and collectively determine whether the signals combine constructively or destructively at the receiver\cite{masouros2015exploiting}.

Consequently, we reformulate beamforming as a generative modeling task that aims to learn a multimodal conditional distribution rather than a deterministic mapping, giving rise to generative site-specific beamforming (GenSSBF) \cite{wang2026generative,zhou2026beam,zhao2026generative}. In our previous work, a beam-brainstorm (BBS) solution was proposed using a conditional diffusion model with compact RSRP vectors as wireless prompts, achieving near-optimal gain with substantially lower probing overhead. Meanwhile, \cite{zhao2026generative} designs a site information-maximizing (SIM) codebook whose RSRP measurements guide a flow matching (FM)-based beam generator. Despite this progress, two significant challenges remain: \textbf{\textit{how to achieve fast beam generation}}, and \textbf{\textit{how to accommodate on-demand flexible probing}}.
For the first question, our prior methods \cite{zhou2026beam,zhao2026generative} rely on iterative sampling. When the generation steps are aggressively reduced for real-time operation, errors may accumulate and degrade beam quality significantly. For the second question, the conditional generative models assume fixed-dimensional prompts, whereas practical systems frequently vary probing budget according to signaling load, latency targets, mobility, and feedback availability. A model tied to one fixed probing length/pattern is therefore difficult to deploy in adaptive networks. Motivated by the above challenges, this paper proposes \emph{fast beam-brainstorm (FBBS)}, a new GenSSBF solution targeting high-fidelity \emph{one or few-step} beam generation under \emph{flexible} CSI probing. Compared with our previous GenSSBF framework, FBBS retains the RSRP-conditioned generative beamforming formulation. However, we introduce average velocity learning for one- or few-step beam generation. Additionally, unlike existing GenSSBF methods that assume a fixed-dimensional wireless prompt, FBBS adopts stochastic masking and a beam index-aware condition encoder to support variable probing budgets using a single trained model. A systematic comparison with existing GenSSBF methods is provided in Table~\ref{tab:comparison_existing_genssbf}. The main contributions are thus summarized as follows:

\begin{table*}[t]
	\centering
	\caption{Methodological comparison of existing GenSSBF frameworks.}
	\label{tab:comparison_existing_genssbf}
	\scriptsize
	\setlength{\tabcolsep}{4pt}
	\renewcommand{\arraystretch}{1.2}
	
	\begin{tabular}{lccccc}
		\toprule
		\textbf{Method}
		& \textbf{Generative Paradigm}
		& \textbf{Training Procedure}
		& \textbf{Wireless Prompt}
		& \textbf{Flexible Probing}
		& \textbf{Online Generation Cost} \\
		\midrule
		
		BBS~\cite{zhou2026beam}
		& \makecell{Conditional reverse\\denoising}
		& \makecell{Single-stage\\denoising training}
		& \makecell{Fixed-length RSRP}
		& \scalebox{2}{\usym{2718}}
		& \makecell{Very high: iterative\\denoising (e.g., $T=1000$)} \\
		\midrule
		
		SIM-GenSSBF~\cite{zhao2026generative}
		& \makecell{Instantaneous-\\velocity-guided movement}
		& \makecell{Single-stage\\flow matching}
		& \makecell{Fixed-length RSRP}
		& \scalebox{2}{\usym{2718}}
		& \makecell{Moderate: multi-step\\ODE integration (e.g., $T=40$)} \\
		\midrule
		
		\textbf{FBBS}
		& \makecell{Average\\velocity-guided movement}
		& \makecell{FM initialization $+$\\split-consistency fine-tuning}
		& \makecell{Variable-length RSRP}
		& \scalebox{2}{\usym{2714}}
		& \makecell{\textbf{Low: one- or few-step}\\\textbf{generation}  (e.g., $T<3$)}  \\
		
		\bottomrule
	\end{tabular}
\end{table*}

\begin{itemize}
	\item We propose \underline{F}ast \underline{B}eam-\underline{B}rain\underline{s}torm (FBBS), a more practical GenSSBF solution. FBBS learns the average velocity field of the beam evolution process through a two-stage strategy. Stage I learns the instantaneous velocity field that transports samples from a prior distribution to the target beam distribution following an optimal-transport path. Stage II fine-tunes this model to an average-velocity predictor via interval-splitting consistency, enabling one or few-step fast beam generation. We further derive a terminal error bound for FBBS, which explicitly decouples three error sources--velocity approximation, prompt imperfections, and inter-step error amplification, thereby establishing the finite-horizon stability of FBBS under practical probing imperfections.
	
	\item To address the fixed-length probing constraint, we propose an online stochastic masking strategy that mixes full and partial conditions within each training batch without enlarging the dataset, thereby enabling a single model to support on-demand probing at inference with no retraining. Furthermore, the integration of beam index-aware semantics, combined with cardinality-normalized masked aggregation, ensures that the sparse RSRP measurements remain both interpretable and stable.
	
	\item Extensive simulations on accurate ray-tracing datasets covering diverse indoor/outdoor and LoS/NLoS environments validate the effectiveness of the proposed FBBS. The results demonstrate that 1) FBBS achieves a superior gain-overhead-speed tradeoff, 2) learning average velocity field for beam evolution, rather than instantaneous local dynamics, is crucial for reliable few-step beam generation, 3) the advantage of FBBS is most pronounced in the low-overhead regime, where mild brainstorming with compact RSRP measurements already captures most of the achievable performance gain, and 4) FBBS remains robust across varying probing budgets and degraded RSRP measurements.
\end{itemize}

The remainder of this paper is organized as follows. The system model and problem formulation are described in Section \ref{Section2}. In Section \ref{Section3}, the proposed FBBS solution is introduced. The datasets employed and the simulation results are provided in Section \ref{Section4}. Section \ref{Section5} concludes the paper.

\emph{Notations:} Scalars, vectors, and matrices are denoted by italic, bold lowercase, and bold uppercase letters, respectively, while calligraphic symbols denote sets or operators when needed. The superscripts $(\cdot)^{\rm T}$ and $(\cdot)^{\rm H}$ represent the transpose and Hermitian transpose, respectively. Moreover, $\mathbb{R}$ and $\mathbb{C}$ denote the real and complex fields, and $\mathbb{E}[\cdot]$ denotes expectation. For a positive integer $n$, $\{0,1\}^{n}$ denotes the set of all binary vectors of length $n$. Finally, $\odot$ denotes the Hadamard product.


\section{System Model and Problem Formulation}\label{Section2}
Consider a downlink system in which a BS equipped with an $N_t$-element uniform linear array (ULA) serves a scheduled single-antenna UE $k$. Its $L$-path channel is modeled as
\begin{equation} 
	\setlength\abovedisplayskip{3pt}
	\setlength\belowdisplayskip{3pt}
	\bm{{\rm h}}_k =\sum_{l=1}^{L}\alpha_{k,l}\bm{{\rm a}}(\phi_{k,l}),
	\vspace*{-0.5\baselineskip}
\end{equation}
\begin{equation}
	\setlength\abovedisplayskip{3pt}
	\setlength\belowdisplayskip{3pt}
	\bm{{\rm a}}(\phi) =\frac{1}{\sqrt{N_t}} \left[ 1,e^{j\frac{2\pi d}{\lambda}\sin\phi},\ldots, e^{j(N_t-1)\frac{2\pi d}{\lambda}\sin\phi} \right]^{T},
\end{equation}
where $\alpha_{k,l}$ and $\phi_{k,l}$ denote the complex gain and angle of departure of the $l$-th path, respectively, while $d$ and $\lambda$ are the antenna spacing and carrier wavelength. The BS employs analog beamforming implemented by phase shifters. Hence, the feasible beamforming set is
\begin{equation}\label{eq:feasible_beam_set}  
	\mathcal{S}\triangleq \left\{ \bm{{\rm w}}\in\mathbb{C}^{N_t\times1}: |w_n|=\frac{1}{\sqrt{N_t}},\ \forall n \right\}, 
\end{equation}
which simultaneously enforces the unit-power and constant-modulus constraints. For a given channel, the beamforming gain is $g_k(\bm{{\rm w}})=|\bm{{\rm h}}_k^{\rm H}\bm{{\rm w}}|^2$. Instead of acquiring the full CSI, the BS observes only coarse CSI through limited probing. Let $\bm{{\rm o}}_{k, Q}\in\mathbb{R}^{Q}$ denote a $Q$-dimensional
observation generated by
\vspace*{-0.5\baselineskip}
\begin{equation}
	\setlength\abovedisplayskip{3pt}
	\setlength\belowdisplayskip{3pt}
	\label{eq:coarse_observation}
	\bm{{\rm o}}_{k, Q}
	=
	\mathcal{P}_{Q}(\bm{{\rm h}}_k,\bm{\xi}_k), \qquad Q\in\mathcal{Q}_p,
\end{equation}
\vspace*{-0.2\baselineskip}
where $\mathcal{P}_{Q}(\cdot)$ represents a generic channel probing operator, $\bm{\xi}_k$ collects measurement imperfections, and $\mathcal{Q}_p$ is the set of admissible probing budgets. Due to the coarseness of the observation $\bm{{\rm o}}_{k, Q}$, multiple beamforming vectors may be plausible with the same observation. Let $q_Q^\star(\bm{{\rm w}}_k^\star \mid \bm{{\rm o}}_{k, Q}, Q)$ denote the conditional distribution of optimal beam $\bm{{\rm w}}_k^\star$. A generative beamformer maps the coarse observation and a latent variable $\bm{{\rm z}}\sim p_{\bm{{\rm z}}}$ to $\bm{{\rm w}}_k=\mathcal{G}_{\vartheta}(\bm{{\rm o}}_{k, Q},\bm{{\rm z}}, Q)\in\mathcal{S}$, which induces a conditional distribution $p_\vartheta(\bm{{\rm w}}\mid \bm{{\rm o}}_{k, Q}, Q)$. Accordingly, the general GenSSBF learning problem is formulated as
\begin{subequations}
	\setlength\abovedisplayskip{3pt}
	\setlength\belowdisplayskip{3pt}
	\label{problem_general_genssbf}
	\begin{align}
		\min_{\vartheta}\quad
		&
		\mathbb{E}_{Q, \bm{{\rm o}}_{k, Q}}
		\left[
		\mathsf{D}
		\left(
		q_Q^\star(\cdot \mid \bm{{\rm o}}_{k, Q}, Q),
		p_\vartheta(\cdot \mid \bm{{\rm o}}_{k, Q}, Q)
		\right)
		\right]
		\label{eq:general_objective}\\
		\mathrm{s.t.}\quad
		&
		\mathcal{G}_{\vartheta}(\bm{{\rm o}}_{k, Q},\bm{{\rm z}};Q)
		\in\mathcal{S},
		\quad
		Q\in\mathcal{Q}_p,
	\end{align}
\end{subequations}
where $\mathsf{D}(\cdot,\cdot)$ denotes a statistical discrepancy between two distributions. Learning the conditional distribution allows independently generated candidates to cover different beams compatible with the similar coarse observation, after which further online probing can select a high-quality candidate. In this work, the coarse observations are instantiated as RSRP measurements because they are directly compatible with the existing 5G NR beam management procedure \cite{Heng2021six}. Specifically, let $\bm{{\rm V}}=[\bm{{\rm v}}_1,\ldots,\bm{{\rm v}}_{Q_{\max}}]$ denote the DFT probing codebook. For a probing budget $Q\in \mathcal{Q}_p$, let $\Omega_Q=\{i_1,\ldots,i_Q\}\subseteq\{1,\ldots,Q_{\max}\}$ denote the selected DFT probing beam indices, where $|\Omega_Q|=Q$. When the BS transmits a unit-power reference signal using $\bm{{\rm v}}_{i}$, $i\in\Omega_Q$, the received signal is $y_{k,i}=\sqrt{P_T}\bm{{\rm h}}_k^{\rm H}\bm{{\rm v}}_i+n_{k,i}$, where $P_T$ is the transmit power, $n_{k,i}\sim\mathcal{CN}(0,\sigma^2)$. The corresponding RSRP is $r_{k,i}\triangleq |y_{k,i}|^2$. Hence, $\bm{{\rm o}}_{k, Q}=[r_{k,i_1},\ldots,r_{k,i_Q}]^{\rm T}\in\mathbb R^Q.$

\section{The Proposed Fast Beam-Brainstorm}\label{Section3}
This section details the proposed FBBS framework. First, Section \ref{section3.A} introduces the core few-step beam generation strategy. Section \ref{section3.B} then presents a conditional velocity predictor based on modified 1D-DiT model. Section \ref{section3.C} extends FBBS to variable-length RSRP conditioning, enabling a single well-trained model to flexibly accommodate different probing budgets. Finally, the complete training and inference procedures are summarized in Section \ref{section3.D}.

\vspace{-0.5\baselineskip}
\subsection{Velocity Field Learning for Beam Evolution}\label{section3.A}
Before introducing the beam generation process, we first construct data used for model training. For generality, let $\bm{{\rm c}}_k$ denote the wireless prompt of UE $k$\footnote{In this work, $\bm{{\rm c}}_k$ is instantiated as the RSRP observation $\bm{{\rm o}}_{k,Q}$, the final prompt fed into the generative model is specified in Section~\ref{section3.C}}. Denote $\bm{{\rm X}}_1^k$ as \textit{target state} for beam generation. The essence of GenSSBF is to learn the \textit{conditional distribution} of such target states given coarse CSI measurements. To construct the target state, we first transform the channel $\bm{{\rm h}}_k\in\mathbb{C}^{N_t\times 1}$ into its angular-domain representation $\bm{{\rm h}}_k^{\mathcal{A}}=\mathcal{F}(\bm{{\rm h}}_k)$, where $\mathcal{F}(\cdot)$ denotes the unitary DFT operator. This transformation exposes the sparse and energy-concentrated structure of far-field multipath channels, thereby providing a more structured target space for beam generation. We then standardize each DFT representation by fixing a deterministic phase reference and normalizing its magnitude scale.
\begin{equation}
	\setlength\abovedisplayskip{3pt}
	\setlength\belowdisplayskip{3pt}
	n_k^\star = \argmax_{n\in\{1,\ldots,N_t\}} |\bm{{\rm h}}_k^{\mathcal{A}}[n]|, \quad
	s_k = \sqrt{\frac{1}{N_t} \sum_{n=1}^{N_t} |\bm{{\rm h}}_k^{\mathcal{A}}[n]|^2},
	\label{canonical_parameters}
\end{equation}
where $\vert \cdot \vert$ denotes the amplitude of a complex number, $n_k^\star$ denotes the index of the strongest DFT coefficient and $s_k$ is the root-mean-square (RMS) magnitude. Thus, the standardized DFT representation is constructed as
\begin{equation}
	\setlength\abovedisplayskip{3pt}
	\setlength\belowdisplayskip{3pt}
	\bar{\bm{{\rm h}}}_k^{\mathcal{A}}=\frac{\bm{{\rm h}}_k^{\mathcal{A}}{\rm e}^{-j \angle \bm{{\rm h}}_k^{\mathcal{A}}[n_k^\star]}}{s_k}.
	\label{canonical_dft_channel}
\end{equation}
The phase rotation removes the global phase degree of freedom, forcing the model to focus on the relative phases among the elements of $\bm{{\rm h}}_k^{\mathcal{A}}$. The RMS normalization unifies the overall magnitude scale while preserving the relative structure among elements. Consequently, the target sample is then defined as\footnote{The generator $\mathcal G_\vartheta$ in Section \ref{Section2} is realized by generating target representation and applying the deterministic beam recovery mapping in \eqref{eq:beam_recovery}.}
\begin{equation}\label{beam_target}
	\setlength\abovedisplayskip{3pt}
	\setlength\belowdisplayskip{3pt}
	\bm{{\rm X}}_1^k = 
	\begin{bmatrix}
		\Re\{ \bar{\bm{{\rm h}}}_k^{\mathcal{A}} \}, \Im\{ \bar{\bm{{\rm h}}}_k^{\mathcal{A}} \}
	\end{bmatrix}^{\rm T} \in\mathbb{R}^{2\times N_t}
\end{equation}

Conditioned on the wireless prompt $\bm{{\rm c}}_k$, the target $\bm{{\rm X}}_1$ follows the conditional distribution $q(\bm{{\rm X}}_1\mid\bm{{\rm c}}_k)$. In the following, all distributions and velocity fields are conditioned on $\bm{{\rm c}}_k$, the conditioning argument is omitted when no ambiguity arises. Importantly, the proposed FBBS does not explicitly parameterize or estimate $q(\bm{{\rm X}}_1)$ itself. Instead, it learns a transport field that progressively maps samples drawn from a simple prior distribution $p_0(\bm{{\rm X}}_0)$ to the target distribution $q(\bm{{\rm X}}_1)$. To this end, we adopt the average velocity-guided beam evolution\cite{guo2025splitmeanflow}. Specifically, a probability density path $p_t(\bm{{\rm X}}_t)$ is constructed for $0\leq t \leq 1$ to connect the source distribution $p_0(\bm{{\rm X}}_0)$ (e.g., a standard normal distribution) and the target distribution $p_1(\bm{{\rm X}}_1)$, where $p_1(\bm{{\rm X}}_1)= q(\bm{{\rm X}}_1)$. The evolution of data samples along this path is governed by a time-dependent velocity field $\bm{{\rm v}}(\bm{{\rm X}}_t, t)$, which serves as the fundamental transport object to be learned. Therefore, the generative process is fully characterized once this velocity field is obtained. In the following, we first elaborate on the learning of the \textit{instantaneous velocity field}. Subsequently, we detail how the model is fine-tuned to effectively learn the \textit{average velocity field}, thereby enabling the generation of high-fidelity beamforming vectors in only one or a few steps.

\subsubsection{Learning Instantaneous Velocity Field}
As discussed above, to transport the source distribution $p_0(\bm{{\rm X}}_0)$ to the final distribution $p_1(\bm{{\rm X}}_1)$, we need to learn the underlying velocity field. To this end, we employ a neural network parameterized by $\vartheta$ to model the instantaneous velocity field $\bm{{\rm v}}^\vartheta(\bm{{\rm \hat{X}}}_t, t)$, which dictates the instantaneous motion of each predicted sample $\bm{{\rm \hat{X}}}_t$ along the probability density path. More particularly, $\bm{{\rm X}}_t$ denotes a state on the prescribed transport path, whereas $\bm{{\rm \hat{X}}}_t$ denotes a state generated by the model. This formulation naturally defines the trajectory of each data sample via the ordinary differential equation (ODE) as
\begin{equation}
	\setlength\abovedisplayskip{3pt}
	\setlength\belowdisplayskip{3pt}
	\frac{{\rm d}}{{\rm d}t}\bm{{\rm \hat{X}}}_t = \bm{{\rm v}}^\vartheta(\bm{{\rm \hat{X}}}_t, t), \bm{{\rm \hat{X}}}_0 = \bm{{\rm X}}_0.
\end{equation}
Once the instantaneous velocity field is learned, generation starts from a prior state $\bm{{\rm X}}_0 \sim p_0(\bm{{\rm X}}_0)$ at $t=0$, and the learned field is integrated from $t=0$ to $t=1$ to obtain the generated sample. Crucially, this deterministic ODE-based sampling is efficient, leading to significantly faster inference compared with conventional diffusion models that rely on stochastic differential equations (SDEs), which typically need hundreds or thousands of iterative steps to produce a single sample.

However, learning $\bm{{\rm v}}(\bm{{\rm X}}_t, t)$ directly is an ill-posed problem because there is no available target for this instantaneous velocity field. In fact, there are infinitely many paths to transport $p_0(\bm{{\rm X}}_0)$ to $p_1(\bm{{\rm X}}_1)$, each corresponding to a different velocity field. To make the optimization problem well-posed, we must first specify how the probability density evolves over time, i.e., we need to define a specific probability density path $p_t(\bm{{\rm X}}_t), t\in [0, 1]$ that connects $p_0(\bm{{\rm X}}_0)$ and $p_1(\bm{{\rm X}}_1)$. Therefore, the problem reduces to \textit{designing a tractable probability density path}, and then \textit{learning the corresponding instantaneous velocity field}. For the first stage, a tractable choice is the conditional optimal-transport path, which defines the trajectory of each sample via a linear interpolation between the source sample $\bm{{\rm X}}_0\sim p_0(\bm{{\rm X}}_0)$ and the final sample $\bm{{\rm X}}_1\sim p_1(\bm{{\rm X}}_1)$. In this work, we adopt the standard forward-time parameterization of the transport path. Specifically, the path is written as
\begin{equation}\label{path}
	\setlength\abovedisplayskip{3pt}
	\setlength\belowdisplayskip{3pt}
	\bm{{\rm X}}_t = (1-t)\bm{{\rm X}}_0 + t\bm{{\rm X}}_1, \quad t\in [0, 1],
\end{equation}
where $t=0$ corresponds to the prior state $\bm{{\rm X}}_0\sim p_0(\bm{{\rm X}}_0)$ and $t=1$ corresponds to the target state $\bm{{\rm X}}_1\sim q(\bm{{\rm X}}_1)$. For this conditional path, the corresponding velocity field $\bm{{\rm v}}(\bm{{\rm X}}_t, t|\bm{{\rm X}}_1)$ can be computed as the training target, which is given by
\begin{equation}
	\setlength\abovedisplayskip{3pt}
	\setlength\belowdisplayskip{3pt}
	\bm{{\rm v}}(\bm{{\rm X}}_t, t|\bm{{\rm X}}_1)=\bm{{\rm X}}_1 - \bm{{\rm X}}_0.
\end{equation}
Finally, the conditional instantaneous velocity prediction loss can be expressed as
\begin{equation}
	\setlength\abovedisplayskip{3pt}
	\setlength\belowdisplayskip{3pt}
	\mathcal{L}_{{\rm iv}}(\vartheta) = \mathbb{E}_{t, \bm{{\rm X}}_0, \bm{{\rm X}}_1}\left[\Vert \bm{{\rm v}}^\vartheta(\bm{{\rm X}}_t, t) - (\bm{{\rm X}}_1 - \bm{{\rm X}}_0)\Vert_F^2\right],
\end{equation}
where $t\sim \mathcal{U}[0,1]$, $\bm{{\rm X}}_0\sim p_0(\bm{{\rm X}}_0)$, and $\bm{{\rm X}}_1\sim q(\bm{{\rm X}}_1)$. Training $\bm{{\rm v}}^\vartheta(\bm{{\rm X}}_t, t)$ by minimizing this loss enables the model to learn the marginal instantaneous velocity field associated with the prescribed probability path. The corresponding generation process is iteratively executed according to
\begin{equation}\label{euler}
	\setlength\abovedisplayskip{3pt}
	\setlength\belowdisplayskip{3pt}
	\bm{{\rm \hat{X}}}_{t+\Delta_t} \gets \bm{{\rm \hat{X}}}_t + \Delta_t \bm{{\rm v}}^\vartheta(\bm{{\rm \hat{X}}}_t, t),
\end{equation}
where $t$ starts from $0$ and increases to $1$, $\bm{{\rm \hat{X}}}_0 = \bm{{\rm X}}_0$. We discretize the time interval $[0, 1]$ into $T$ uniform steps, with a step size of $\Delta_t = 1/T$. After $T$ forward updates, a generated approximation $\bm{{\rm \hat{X}}}_1$ is obtained.

\begin{figure}[t]
	\centering
	\includegraphics[width=3.5in]{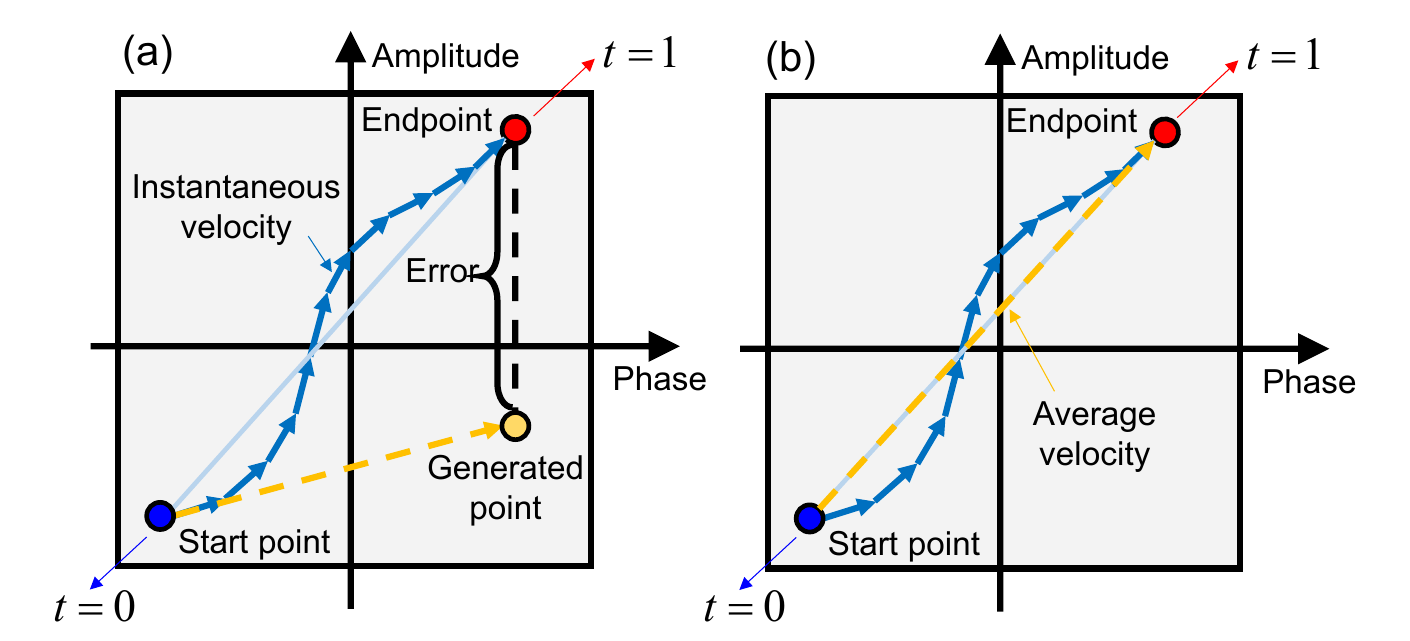}
	\caption{Comparisons of the discretization error in one-step generation, (a) learning instantaneous velocity and (b) learning average velocity.}
	\label{figure_1}
\end{figure}

\subsubsection{Learning Average Velocity Field}
For a model that predicts the instantaneous velocity field, numerical integration is required to evolve the state trajectory. When the number of integration steps $T$ is sufficiently large, solvers such as the Euler method in (\ref{euler}) can produce satisfactory states. However, when $T$ is small, the time discretization becomes coarse, making the numerical approximation less accurate and causing substantial error accumulation along the trajectory. As a result, the quality of the generated beamforming vectors may deteriorate noticeably. This quality-efficiency tradeoff becomes particularly pronounced in aggressive one-step beam generation, where one starts from the prior state $\bm{{\rm X}}_0$ at $t=0$ and applies only a single forward update over the whole interval:
\vspace*{-0.6\baselineskip}
\begin{equation}
	\setlength\abovedisplayskip{3pt}
	\setlength\belowdisplayskip{3pt}
	\bm{{\rm \hat{X}}}_1 \gets \bm{{\rm X}}_0 + \bm{{\rm v}}^\vartheta(\bm{{\rm X}}_0,0).
\end{equation}
This is because the instantaneous velocity evaluated only at the interval start cannot faithfully represent the overall transformation over the whole interval, which leads to significant discretization error as illustrated in Fig. \ref{figure_1}. To address this issue, we shift the learning object from instantaneous velocity to the average velocity over a time interval, whose forward-time formulation is directly compatible with explicit few-step generation. We therefore define the forward interval-average velocity field with respect to the interval-start state $\bm{{\rm X}}_r$ as
\begin{equation}\label{average_velocity}
	\setlength\abovedisplayskip{3pt}
	\setlength\belowdisplayskip{3pt}
	\begin{aligned}
		\bm{{\rm u}}(\bm{{\rm X}}_r, r, t) &\triangleq \frac{\bm{{\rm X}}_t-\bm{{\rm X}}_r}{t-r} \\
		&= \frac{1}{t-r}\int_r^t \bm{{\rm v}}(\bm{{\rm X}}_\tau, \tau)d\tau,\quad 0\le r < t \le 1.
	\end{aligned}
\end{equation}
By definition, the associated exact interval update is
\begin{equation}\label{exact_interval_update}
	\setlength\abovedisplayskip{3pt}
	\setlength\belowdisplayskip{3pt}
	\bm{{\rm X}}_t = \bm{{\rm X}}_r + (t-r)\bm{{\rm u}}(\bm{{\rm X}}_r, r, t).
\end{equation}
Eq. (\ref{average_velocity}) reveals the connection between the average and instantaneous velocities, and the latter can be viewed as the limit of the average velocity as the time interval length approaches
zero:
\begin{equation}
	\setlength\abovedisplayskip{3pt}
	\setlength\belowdisplayskip{3pt}
	\bm{{\rm u}}(\bm{{\rm X}}_t, t, t)\triangleq\lim_{r\to t}\bm{{\rm u}}(\bm{{\rm X}}_r, r, t) = \bm{{\rm v}}(\bm{{\rm X}}_t, t).
\end{equation}
However, directly learning (\ref{average_velocity}) is intractable, as it requires access to the full trajectory of the data sample. Inspired by \cite{guo2025splitmeanflow}, we introduce an intermediate time step $s$ such that $r < s < t$. By the additivity of definite integrals, we have
\begin{equation}\label{self-consistency}
	\setlength\abovedisplayskip{3pt}
	\setlength\belowdisplayskip{3pt}
	(t-r)\bm{{\rm u}}(\bm{{\rm X}}_r, r, t) \!=\! (s-r)\bm{{\rm u}}(\bm{{\rm X}}_r, r, s) + (t-s)\bm{{\rm u}}(\bm{{\rm X}}_s, s, t).
\end{equation}
Equation (\ref{self-consistency}) defines a self-consistent constraint that the average velocity field must satisfy across any partition of the time interval. Therefore, although the ground-truth average velocity is inaccessible, we can train a model $\bm{{\rm u}}^\vartheta(\bm{{\rm X}}_r, r, t)$ by enforcing this consistency during training. By rearranging (\ref{self-consistency}), we obtain
\begin{equation}
	\setlength\abovedisplayskip{3pt}
	\setlength\belowdisplayskip{3pt}
	\bm{{\rm u}}(\bm{{\rm X}}_r, r, t) = (1-\kappa)\bm{{\rm u}}(\bm{{\rm X}}_r, r, s)+\kappa\bm{{\rm u}}(\bm{{\rm X}}_s, s, t),
\end{equation}
where $\kappa = (t-s)/(t-r)\in [0,1]$. Hence, the split-consistency target becomes
\begin{equation}
	\setlength\abovedisplayskip{3pt}
	\setlength\belowdisplayskip{3pt}
	\bm{{\rm \bar{u}}}(\bm{{\rm X}}_r, r, t) = (1-\kappa)\bm{{\rm u}}^\vartheta(\bm{{\rm X}}_r, r, s)+\kappa\bm{{\rm u}}^\vartheta(\bm{{\rm \hat{X}}}_s, s, t),
\end{equation}
where $\bm{{\rm \hat{X}}}_s=\bm{{\rm X}}_r+(s-r)\bm{{\rm u}}^\vartheta(\bm{{\rm X}}_r, r, s)$. Accordingly, the loss for average-velocity learning is written as
\begin{equation}
	\setlength\abovedisplayskip{3pt}
	\setlength\belowdisplayskip{3pt}
	\mathcal{L}_{{\rm av}}(\vartheta) \!\!=\!\! \mathbb{E}_{r,s,t,\bm{{\rm X}}_0, \bm{{\rm X}}_1}\!\!\left[\!\Vert \bm{{\rm u}}^\vartheta(\bm{{\rm X}}_r,\! r, t) \!\!-\! {\rm sg}\left[\bm{{\rm \bar{u}}}(\bm{{\rm X}}_r, r, t)\right] \Vert_F^2\right]\!\!,
\end{equation}
where ${\rm sg}(\cdot)$ denotes the stop-gradient operator. The architectural implementation of prompt conditioning is described in the next subsection.

\begin{remark}[Exactness of the Interval-Average Velocity Representation]
	We now briefly elucidate why it can reproduce the original transport process. Recall that the original transport from $\bm{{\rm X}}_r$ to $\bm{{\rm X}}_t$ is governed by the instantaneous velocity ODE and therefore satisfies $\bm{{\rm X}}_t=\bm{{\rm X}}_r+\int_r^t \bm{{\rm v}}(\bm{{\rm X}}_{\tau}, \tau)d\tau$. By definition, the integral displacement is exactly equal to $(t-r)\bm{{\rm u}}(\bm{{\rm X}}_r, r, t)$, yielding (\ref{exact_interval_update}). Hence, the update using average velocity is an equivalent representation of the original ODE transport rather than a numerical approximation. In contrast, the instantaneous velocity update approximates the integral $\int_r^t \bm{{\rm v}}(\bm{{\rm X}}_{\tau}, \tau)d\tau$ using $(t-r)\bm{{\rm v}}(\bm{{\rm X}}_r, r)$. Thus, a time-discretization error is introduced whenever the velocity varies along the trajectory.
\end{remark}
\vspace*{-0.6\baselineskip}
While the exact average velocity $\bm{{\rm u}}(\bm{{\rm X}}_r, r, t; \bm{{\rm c}}^\star)$ eliminates time discretization error, the accuracy of the learned field $\bm{{\rm u}}^\vartheta(\bm{{\rm X}}_r, r, t; \bm{{\rm c}})$ is affected by both model approximation and practical probing imperfections, where $\bm{{\rm c}}^\star$ and $\bm{{\rm c}}$ denote the ideal and actual wireless prompts, respectively. In practice, their mismatch may arise from measurement noise or feedback delay. We next characterize how these errors affect the transport and propagate during few-step inference.
\vspace*{-0.2\baselineskip}

\begin{proposition}[Terminal Error Bound for Learned Average-Velocity Transport under Prompt Perturbations]\label{propositon1}
	Consider a partition $0=t_0<\cdots<t_T=1$, where $\Delta_i=t_{i+1}-t_i$, $i=0,\ldots,T-1$. Let $\bm{{\rm c}}^\star$ and $\bm{{\rm c}}$ denote the ideal and perturbed wireless prompts, respectively, and define
	\begin{equation}
		\setlength\abovedisplayskip{3pt}
		\setlength\belowdisplayskip{3pt}
		\delta_c = \Vert \bm{{\rm c}}-\bm{{\rm c}}^\star \Vert_F.
	\end{equation}
	Suppose that, over a domain $\mathcal{X}$ containing both the ideal trajectory $\{\bm{{\rm X}}_{t_i}\}_{i=0}^T$ and the generated trajectories $\{\bm{\hat{{\rm X}}}_{t_i}\}_{i=0}^T$, the learned average velocity field satisfies
	\begin{equation}
		\setlength\abovedisplayskip{3pt}
		\setlength\belowdisplayskip{3pt}
		\Vert \bm{{\rm u}}^\vartheta(\bm{{\rm X}}, t_i, t_{i+1}; \bm{{\rm c}}^\star)-\bm{{\rm u}}(\bm{{\rm X}}, t_i, t_{i+1}; \bm{{\rm c}}^\star)\Vert_F \leq \epsilon_i,
	\end{equation}
	where $\epsilon_i$ is defined as the worst-case prediction error over the $i$-th time interval. Further, assume that $\bm{{\rm u}}^{\vartheta}$ is $L_u$- and $L_c$-Lipschitz continuous with respect to its state and condition, respectively. Therefore, we have
	\begin{equation}\label{eq:lipschitz_state}
		\setlength\abovedisplayskip{3pt}
		\setlength\belowdisplayskip{3pt}
		\left\|\bm{{\rm u}}^\vartheta(\bm{{\rm X}}, t_i, t_{i+1}; \bm{{\rm c}})
		-\bm{{\rm u}}^\vartheta(\bm{{\rm Y}}, t_i, t_{i+1}; \bm{{\rm c}})\right\|_F
		\leq L_u\left\|\bm{{\rm X}}-\bm{{\rm Y}}\right\|_F,
	\end{equation}
	for all $\bm{{\rm X}}, \bm{{\rm Y}}\in \mathcal{x}$ and relevant prompts $\bm{{\rm c}}$, and 
	\begin{equation}\label{eq:lipschitz_condition}
		\setlength\abovedisplayskip{3pt}
		\setlength\belowdisplayskip{3pt}
		\left\|\bm{{\rm u}}^\vartheta(\bm{{\rm X}}, t_i, t_{i+1}; \bm{{\rm c}}_1)
		\!-\!\bm{{\rm u}}^\vartheta(\bm{{\rm X}}, t_i, t_{i+1}; \bm{{\rm c}}_2)\right\|_F
		\leq L_c\left\|\bm{{\rm c}}_1-\bm{{\rm c}}_2\right\|_F,
	\end{equation}
	for all $\bm{{\rm X}} \in \mathcal{X}$ and relevant prompts $\bm{{\rm c}}_1$ and $\bm{{\rm c}}_2$.
	If the ideal and generated trajectories start from the same initial state, then the terminal generation error satisfies
	\begin{equation}\label{eq:bound}
		\setlength\abovedisplayskip{3pt}
		\setlength\belowdisplayskip{3pt}
		\Vert \bm{{\rm \hat{X}}}_{1}-\bm{{\rm X}}_{1}\Vert_F
		\leq 
		\sum_{i=0}^{T-1}\Delta_i (\epsilon_i + L_c\delta_c)\prod_{j={i+1}}^{T-1}(1+L_u\Delta_j)
	\end{equation}
	In particular, one-step generation satisfies $\Vert \bm{{\rm \hat{X}}}_1\!\!-\!\!\bm{{\rm X}}_{1}\Vert_F \!\leq \epsilon_0 + L_c\delta_c$.
\end{proposition}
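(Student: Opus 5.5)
We use a discrete Grönwall-type argument on the per-step error $e_i \triangleq \Vert \bm{{\rm \hat{X}}}_{t_i}-\bm{{\rm X}}_{t_i}\Vert_F$, with $e_0=0$ since both trajectories start from the same initial state. The two recursions are the exact interval update in (\ref{exact_interval_update}) for the ideal trajectory,
\begin{equation*}
\bm{{\rm X}}_{t_{i+1}}=\bm{{\rm X}}_{t_i}+\Delta_i\,\bm{{\rm u}}(\bm{{\rm X}}_{t_i},t_i,t_{i+1};\bm{{\rm c}}^\star),
\end{equation*}
which involves no discretization error by the preceding Remark, and the learned few-step update with the perturbed prompt,
\begin{equation*}
\bm{{\rm \hat{X}}}_{t_{i+1}}=\bm{{\rm \hat{X}}}_{t_i}+\Delta_i\,\bm{{\rm u}}^\vartheta(\bm{{\rm \hat{X}}}_{t_i},t_i,t_{i+1};\bm{{\rm c}}).
\end{equation*}

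Subtracting the two recursions, we insert two intermediate terms and split the velocity mismatch as
\begin{equation*}
\begin{aligned}
&\bm{{\rm u}}^\vartheta(\bm{{\rm \hat{X}}}_{t_i};\bm{{\rm c}})-\bm{{\rm u}}(\bm{{\rm X}}_{t_i};\bm{{\rm c}}^\star)
=\big[\bm{{\rm u}}^\vartheta(\bm{{\rm \hat{X}}}_{t_i};\bm{{\rm c}})-\bm{{\rm u}}^\vartheta(\bm{{\rm X}}_{t_i};\bm{{\rm c}})\big]\\
&\quad+\big[\bm{{\rm u}}^\vartheta(\bm{{\rm X}}_{t_i};\bm{{\rm c}})-\bm{{\rm u}}^\vartheta(\bm{{\rm X}}_{t_i};\bm{{\rm c}}^\star)\big]
+\big[\bm{{\rm u}}^\vartheta(\bm{{\rm X}}_{t_i};\bm{{\rm c}}^\star)-\bm{{\rm u}}(\bm{{\rm X}}_{t_i};\bm{{\rm c}}^\star)\big],
\end{aligned}
\end{equation*}
with time arguments $(t_i,t_{i+1})$ suppressed. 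The three brackets are bounded, respectively, by
\begin{itemize}
\item $L_u e_i$, using (\ref{eq:lipschitz_state});
\item $L_c\delta_c$, using (\ref{eq:lipschitz_condition});
\item $\epsilon_i$, using the approximation assumption.
\end{itemize}
All three bounds are legitimate because both trajectories lie in $\mathcal{X}$ by hypothesis. The triangle inequality then gives the one-step recursion
\begin{equation*}
e_{i+1}\le (1+L_u\Delta_i)\,e_i+\Delta_i(\epsilon_i+L_c\delta_c).
\end{equation*}

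Next we unroll this recursion by induction on $i$. The claim is
\begin{equation*}
e_{k}\le \sum_{i=0}^{k-1}\Delta_i(\epsilon_i+L_c\delta_c)\prod_{j=i+1}^{k-1}(1+L_u\Delta_j),
\end{equation*}
with empty products equal to $1$. The base case $k=0$ is $e_0=0$. For the inductive step, substitute the claim for $e_k$ into the recursion: multiplying by $(1+L_u\Delta_k)$ extends each product to $j=k$, and the new forcing term $\Delta_k(\epsilon_k+L_c\delta_c)$ enters with an empty product. Setting $k=T$ gives (\ref{eq:bound}).

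For $T=1$ we have $\Delta_0=1$, and the product over $j$ from $1$ to $0$ is empty. The bound therefore reduces to $\epsilon_0+L_c\delta_c$.

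There is no serious obstacle; the only delicate point is the order of the splitting. The state perturbation must be moved with the learned field $\bm{{\rm u}}^\vartheta$, whose Lipschitz constant is assumed, and not with the true field $\bm{{\rm u}}$, for which no Lipschitz property is given. With this order, the approximation error $\epsilon_i$ is evaluated at the ideal state $\bm{{\rm X}}_{t_i}\in\mathcal{X}$ with the ideal prompt $\bm{{\rm c}}^\star$, exactly as the hypothesis requires.
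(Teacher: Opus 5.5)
Your proposal is correct and follows the paper's proof essentially step for step. It uses the same error variable $e_i$, the same three-term splitting in the same order (state perturbation via $L_u$ on the learned field, prompt perturbation via $L_c$, approximation via $\epsilon_i$ at the ideal state and prompt), the same recursion $e_{i+1}\le(1+L_u\Delta_i)e_i+\Delta_i(\epsilon_i+L_c\delta_c)$, and the same unrolling from $e_0=0$; your explicit induction just spells out the unrolling step that the paper states in one line.
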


\begin{IEEEproof}
	Define the state discrepancy at $t_i$ as $e_i\triangleq \|\bm{{\rm \hat{X}}}_{t_i}-\bm{{\rm X}}_{t_i}\|_F$, $i=0,\cdots,T$. For the $i$-th interval $[t_i, t_{i+1}]$, the ideal conditional transport satisfies
	\begin{equation}
		\setlength\abovedisplayskip{3pt}
		\setlength\belowdisplayskip{3pt}
		\bm{{\rm X}}_{t_{i+1}} = \bm{{\rm X}}_{t_i} + \Delta_{i} \bm{{\rm u}}(\bm{{\rm X}}_{t_i}, t_i, t_{i+1}; \bm{{\rm c}}^\star),
	\end{equation}
	whereas the learned transport under the perturbed prompt is
	\begin{equation}
		\setlength\abovedisplayskip{3pt}
		\setlength\belowdisplayskip{3pt}
		\bm{{\rm \hat{X}}}_{t_{i+1}} = \bm{{\rm \hat{X}}}_{t_i} + \Delta_{i} \bm{{\rm u}}^{\vartheta}(\bm{{\rm \hat{X}}}_{t_i}, t_i, t_{i+1}; \bm{{\rm c}}).
	\end{equation}
	Subtracting the ideal update from the learned update and applying the triangle inequality gives
	\begin{equation}\label{eq:error_triangle_inequality}
		\setlength\abovedisplayskip{3pt}
		\setlength\belowdisplayskip{3pt}
		e_{i+1} \! \leq \! e_i \!+\! \Delta_{i} \Vert \bm{{\rm u}}^{\vartheta}(\bm{{\rm \hat{X}}}_{t_i}, t_i, t_{i+1}; \bm{{\rm c}}) \!-\! \bm{{\rm u}}(\bm{{\rm X}}_{t_i}, t_i, t_{i+1}; \bm{{\rm c}}^\star) \Vert_F.
	\end{equation}
	The velocity discrepancy can be decomposed as
	\begin{align}
		& \Vert \bm{{\rm u}}^{\vartheta}(\bm{{\rm \hat{X}}}_{t_i}, t_i, t_{i+1}; \bm{{\rm c}}) - \bm{{\rm u}}(\bm{{\rm X}}_{t_i}, t_i, t_{i+1}; \bm{{\rm c}}^\star) \Vert_F \nonumber \\
		& \leq \Vert \bm{{\rm u}}^{\vartheta}(\bm{{\rm \hat{X}}}_{t_i}, t_i, t_{i+1}; \bm{{\rm c}}) - \bm{{\rm u}}^{\vartheta}(\bm{{\rm X}}_{t_i}, t_i, t_{i+1}; \bm{{\rm c}}) \Vert_F \nonumber \\
		&+ \Vert \bm{{\rm u}}^{\vartheta}(\bm{{\rm X}}_{t_i}, t_i, t_{i+1}; \bm{{\rm c}}) - \bm{{\rm u}}^{\vartheta}(\bm{{\rm X}}_{t_i}, t_i, t_{i+1}; \bm{{\rm c}}^\star) \Vert_F \nonumber \\
		&+\Vert \bm{{\rm u}}^{\vartheta}(\bm{{\rm X}}_{t_i}, t_i, t_{i+1}; \bm{{\rm c}}^\star) - \bm{{\rm u}}(\bm{{\rm X}}_{t_i}, t_i, t_{i+1}; \bm{{\rm c}}^\star) \Vert_F, \label{eq:prediction_error}
	\end{align}
	where the first term characterizes the velocity deviation caused by the \textit{accumulated state error}, the second term represents the velocity deviation caused by the \textit{perturbed wireless prompt}, and the third term represents the \textit{velocity approximation error} inherent in the model itself. Based on (\ref{eq:lipschitz_state}), (\ref{eq:lipschitz_condition}), we have
	\begin{equation}\label{eq:inequality_joint_state_and_condition}
		\setlength\abovedisplayskip{3pt}
		\setlength\belowdisplayskip{3pt}
		\Vert \bm{{\rm u}}^{\vartheta}(\bm{{\rm \hat{X}}}_{t_i}, t_i, t_{i+1}; \bm{{\rm c}}) - \bm{{\rm u}}(\bm{{\rm X}}_{t_i}, t_i, t_{i+1}; \bm{{\rm c}}^\star) \Vert_F 
		\leq L_ue_i + L_c\delta_c + \epsilon_i.
	\end{equation}
	Substituting (\ref{eq:inequality_joint_state_and_condition}) into (\ref{eq:error_triangle_inequality}), we obtain
	\begin{equation}\label{eq:recurrence}
		\setlength\abovedisplayskip{3pt}
		\setlength\belowdisplayskip{3pt}
		e_{i+1} \leq (1+L_u \Delta_{i}) e_i + \Delta_{i} (\epsilon_i + L_c\delta_c).
	\end{equation}
	Since the ideal and generated trajectories share the same initial state, i.e., $e_0 = \Vert \bm{{\rm \hat{X}}}_0 - \bm{{\rm X}}_0 \Vert_F=0$, recursively applying the error inequality yields
	\begin{equation}
		\setlength\abovedisplayskip{3pt}
		\setlength\belowdisplayskip{3pt}
		e_T \leq \sum_{i=0}^{T-1}\Delta_i (\epsilon_i + L_c\delta_c)\prod_{j={i+1}}^{T-1}(1+L_u\Delta_j).
	\end{equation}
	where the empty product corresponding to $i=T-1$ is defined as one. Since $t_T=1$, we have $e_T = \Vert \bm{{\rm \hat{X}}}_1 - \bm{{\rm X}}_1 \Vert_F$. Therefore, 
	\begin{equation}
		\setlength\abovedisplayskip{3pt}
		\setlength\belowdisplayskip{3pt}
		\Vert \bm{{\rm \hat{X}}}_1 - \bm{{\rm X}}_1 \Vert_F \leq \sum_{i=0}^{T-1}\Delta_i (\epsilon_i + L_c\delta_c)\prod_{j={i+1}}^{T-1}(1+L_u\Delta_j).
	\end{equation}
	When $T=1$, we have $\Vert \bm{{\rm \hat{X}}}_1\!\!-\!\!\bm{{\rm X}}_{1}\Vert_F \!\leq \epsilon_0 + L_c\delta_c$. This completes the proof.
\end{IEEEproof}

\textbf{Proposition} \textbf{\ref{propositon1}} decomposes the terminal transport error into three factors: the average-velocity approximation errors $\{\epsilon_i\}_{i=0}^{T-1}$, the practical probing error $L_c\delta_c$, and the multi-step error amplification governed by $L_u$. Building upon this, we conduct further analysis and uncover several insights for GenSSBF:
\begin{itemize}
	\item The error bound in (\ref{eq:bound}) characterizes a trade-off between prediction accuracy and error propagation. In practice, increasing $T$ brings shorter intervals, which may reduce the corresponding
	approximation errors $\epsilon_i$, but introduces additional
	inter-step error propagation. Therefore, one-step generation is preferable when the full-interval average velocity can be predicted accurately, whereas few-step generation can be beneficial when the reduction in interval-wise approximation error outweighs the additional propagation.
	
	\item The terminal error explicitly separates the average velocity approximation error from the practical probing error. This decomposition identifies two complementary design directions: improving average velocity learning to reduce $\epsilon_i$, and enhancing prompt robustness by reducing the condition mismatch $\delta_c$ or its sensitivity $L_c$. As a direct consistency result, $\max_i\epsilon_i\rightarrow0$ and $\delta_c\rightarrow0$ imply $e_T\rightarrow0$ for any finite partition.
\end{itemize}

\vspace{-0.5\baselineskip}
\subsection{Model Architecture}\label{section3.B}
The architecture of the velocity predictor in the proposed FBBS framework is illustrated in Fig. \ref{figure_2}. To realize the beam evolution strategy introduced in Section \ref{section3.A}, we develop a customized one-dimensional diffusion transformer (DiT)-style backbone that operates on beam representations and supports both instantaneous- and interval-conditioned velocity prediction. Given the current state $\bm{{\rm X}}_r\in \mathbb{R}^{B\times C\times N_t}$, the time interval $(r,t)$, and the condition $\bm{{\rm c}}_k$, the model outputs a velocity tensor with the same dimensionality as $\bm{{\rm X}}_r$, where $B$ is the batch size, $C$ denotes the number of input channels (e.g., real and imaginary components), and $N_t$ is the number of transmit antenna elements. This backbone is shared by both learning stages of FBBS: in Stage I, it operates in instantaneous mode and predicts $\bm{{\rm u}}^\vartheta(\bm{{\rm X}}_r,r,r,\bm{{\rm c}}_k)$; in Stage II, it operates in interval mode and predicts $\bm{{\rm u}}^\vartheta(\bm{{\rm X}}_r,r,t,\bm{{\rm c}}_k)$.

\begin{figure}[t]
	\centering
	\includegraphics[width=3.0in]{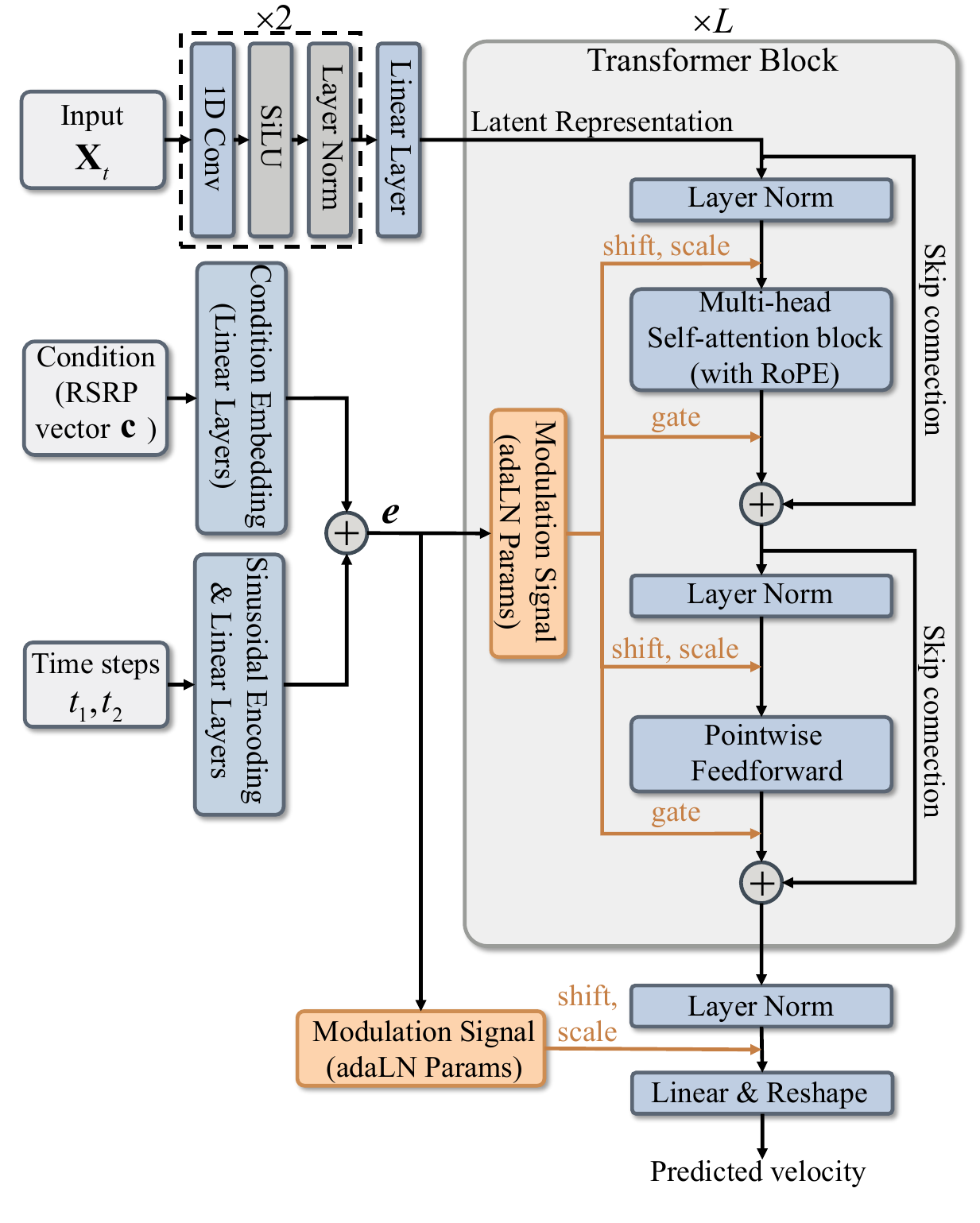}
	\caption{The architecture of the 1D-DiT.}
	\label{figure_2}
\end{figure}

The backbone consists of a stack of $L$ transformer blocks, each containing a multi-head self-attention module and a feed-forward network (FFN). This design is adopted because beam evolution is inherently a structured sequence modeling problem: the desired update at one beam component depends not only on its local value, but also on its interaction with other components across the entire beam vector. To capture such dependencies, rotary position embedding (RoPE) is incorporated into the self-attention module, enabling the network to encode relative positional relationships along the one-dimensional beam sequence while preserving sensitivity to global structural patterns.

Conditioning is injected through adaptive layer normalization (adaLN)\cite{peebles2023scalable}, as shown by the orange line in Fig. \ref{figure_2}, where the normalization parameters are modulated by the total embedding vector $\bm{{\rm e}}_{\text{total}}=\bm{{\rm e}}_{\text{time}}(r,t)+\bm{{\rm e}}_{\text{cond}}$. Here, $\bm{{\rm e}}_{\text{time}}(r,t)$ denotes the embedding of the time interval $(r,t)$, while $\bm{{\rm e}}_{\text{cond}}$ denotes the condition embedding defined in Section \ref{section3.C}. This vector is then linearly projected to six modulation parameters, namely $\beta_1$, $\gamma_1$, and $\alpha_1$ for the attention branch, and $\beta_2$, $\gamma_2$, and $\alpha_2$ for the FFN branch. These parameters control feature-wise shifting, scaling, and gating inside each transformer block, allowing the wireless prompt and time interval to modulate feature extraction throughout the entire network. After passing through the $L$ transformer blocks, a final adaLN layer and a linear projection are applied to map the latent representation back to the original channel dimension $C$, yielding the predicted velocity tensor.
\vspace*{-0.8\baselineskip}
\subsection{Flexible Probing via Stochastic Masking}\label{section3.C}
Another key limitation of the existing GenSSBF methods is that they typically assume a fixed-length condition, requiring separate models when the number of available RSRP measurements changes. To enable flexible probing with a single model, we apply stochastic masking to the RSRP condition during training. Let $\bm{{\rm o}}_{k,Q_{{\rm max}}}\in \mathbb{R}^{Q_{{\rm max}}}$ denote the complete RSRP measurements and $\bm{{\rm m}}\in \{0, 1\}^{Q_{{\rm max}}}$ its binary availability mask. For each mini-batch of size $B$, a \emph{sparsity controller} $p_{\text{full}}\in(0,1]$ specifies the fraction of samples that retain the full conditioning information. This controller can operate under either a constant or a dynamic sparsity schedule. In particular, the former maintains a fixed sparsity level throughout training, while the latter sets sparsity $p_{\text{full}}$ as an explicit function of the training epoch. Thus, $B_{\rm full}=\lfloor p_{\rm full}B\rfloor$ samples retain the complete condition ($\mathbf{m}=\mathbf{1}$), while for the remaining $B-B_{\text{full}}$ samples in this batch, we randomly assign probing budgets and apply corresponding masks. This design is motivated by the following considerations: First, full-condition samples provide a high-information supervision anchor, preventing the model from drifting toward solutions tailored only to overly sparse conditions. Second, they stabilize optimization in both Stage I and II in Section \ref{section3.A} by reducing gradient variance induced by aggressive masking and by improving the reliability of split-interval training targets. Third, they preserve calibration to the high-information regime, ensuring that the same model remains competitive when larger probing budgets are available at inference. The active probing indices are selected by uniform spacing over the codebook grid, which preserves global angular coverage under reduced feedback. The masked condition is thus given by $\bm{{\rm c}}_k = \bm{{\rm m}}\odot \bm{{\rm o}}_{k,Q_{{\rm max}}}$, and is standardized prior to input. It should be noted that the masked condition is generated on-the-fly and used in both training stages, avoiding dataset expansion and budget-specific retraining. Since zero-padding the RSRP vector alone does not identify which probing beams are observed, we further employ a beam index-aware condition encoder. Specifically, the condition embedding is computed as
\vspace{-0.1\baselineskip}
\begin{equation}
	\setlength\abovedisplayskip{3pt}
	\setlength\belowdisplayskip{3pt}
	\bm{{\rm e}}_{\rm cond}
	=
	f_{\rm out}\!\left(
	\frac{1}{\|\bm{{\rm m}}\|_1}
	\sum_{i=1}^{Q_{\max}}
	m_i
	f_{\rm pos}\!\left(
	\bm{{\rm W}}_v\bm{{\rm c}}_k[i]+\bm{{\rm b}}_v+\bm{{\rm p}}_i
	\right)
	\right),
\end{equation}
where $\bm{{\rm W}}_v$ and $\bm{{\rm b}}_v$ are the weights and bias of the point-wise linear layer. $\bm{{\rm p}}_i$ is the sinusoidal embedding of the $i$-th probing beam index, and $f_{\rm pos}(\cdot)$ and $f_{\rm out}(\cdot)$ are position-wise nonlinear projectors. The index embedding preserves the semantics of the observed beams, while normalization by $\|\bm{{\rm m}}\|_1$ prevents the embedding scale from varying with the probing budget. The resulting fixed-dimensional $\bm{{\rm e}}_{\rm cond}$ is combined with the time embedding through the adaLN modulation described in Section III-B. Consequently, a single FBBS model can accommodate different probing budgets without retraining. At deployment, the base station can select any $Q_{\text{infer}}\in \mathcal{Q}_p$ without retraining. A small $Q_{\text{infer}}$ is suitable for low-latency operation with minimal sweeping overhead, whereas a larger $Q_{\text{infer}}$ provides richer conditioning when the system can afford more information. Probing budgets can also be selected adaptively according to mobility, channel dynamics, or quality-of-service (QoS) requirements.

\begin{remark}[Lipschitz Continuity of the Adopted Backbone]
	For a fixed availability mask $\bm{{\rm m}}$, the proposed 1D-DiT is composed of finite-dimensional affine mappings, SiLU activations, softmax attention, RoPE transformations, and normalization layers with strictly positive numerical constants. Moreover, the time variables $(r,t)$ and the standardized condition vector lie in bounded admissible domains. Therefore, for finite learned parameters, the predictor has bounded Jacobians on the considered compact domain. Consequently, finite constants $L_u$ and $L_c$ exist such that the Lipschitz conditions in (\ref{eq:lipschitz_state}) and (\ref{eq:lipschitz_condition}) hold.
\end{remark}

\subsection{Training and Deployment}\label{section3.D}
\begin{algorithm}[t]
	\caption{Offline Training of FBBS}
	\label{algorithm1}
	\begin{algorithmic}[1]
	\STATE \textbf{Input:} dataset $\mathcal{D}=\{(\bm{{\rm X}}_1^d,\bm{{\rm c}}_d)\}_{d=1}^D$, total epochs $E$, stage-I epochs $E_1$, boundary ratio $p$, full-condition ratio $p_{\rm full}$, probing budget set $\mathcal{Q}_{p}$, batch size $B$
	\STATE \textbf{Output:} trained parameters $\vartheta$
	
	\FOR{$e=1$ to $E$}
	\FOR{mini-batch $(\mathbf X_1,\mathbf c)\sim\mathcal D$}
	\STATE Sample $\mathbf X_0\sim\mathcal N(\mathbf 0,\mathbf I)$
	\STATE $(\widetilde{\mathbf c},\mathbf m)\gets\operatorname{Mask}(\mathbf c;\mathcal Q_p,p_{\rm full})$
	\IF{$e\leq E_1$}
		\STATE Sample $t\sim\mathcal U[0,1]^B$
		\STATE $\mathbf X_t\gets(1-t)\mathbf X_0+t\mathbf X_1$
		\STATE Compute the instantaneous velocity loss $\mathcal L_{\rm iv}=\|u_{\vartheta}(\mathbf X_t,t,t;\widetilde{\mathbf c},\mathbf m)-(\mathbf X_1-\mathbf X_0)\|_F^2$
	\ELSE
		\STATE Sample and sort two time points $0\leq r\leq t\leq1$
		\STATE Randomly set $r=t$ for $\lfloor pB\rfloor$ samples
		\STATE Sample $\kappa\sim\mathcal U[0,1]^B$ and set $s=(1-\kappa)t+\kappa r$
		\STATE $\mathbf X_r\gets(1-r)\mathbf X_0+r\mathbf X_1$
		\STATE $\widehat{\mathbf u}_{r,t}\gets {\mathbf u}_{\vartheta}(\mathbf X_r,r,t; \widetilde{\mathbf c},\mathbf m)$
		\STATE $\widehat{\mathbf u}_{r,s}\gets {\mathbf u}_{\vartheta}(\mathbf X_r,r,s;\widetilde{\mathbf c},\mathbf m)$
		\STATE $\widehat{\mathbf X}_s\gets\mathbf X_r+(s-r)\widehat{\mathbf u}_{r,s}$
		\STATE $\widehat{\mathbf u}_{s,t}\gets {\mathbf u}_{\vartheta}(\widehat{\mathbf X}_s,s,t;\widetilde{\mathbf c},\mathbf m)$
		\STATE Form target $\mathbf u_{\rm tgt}\gets (1-\kappa)\widehat{\mathbf u}_{r,s}+\kappa\widehat{\mathbf u}_{s,t}$
		\STATE For boundary samples with $t=r$, replace the target as $\mathbf u_{\rm tgt}=\mathbf X_1-\mathbf X_0$
		\STATE Compute the average velocity loss $\mathcal L_{\rm av}=\|\widehat{\mathbf u}_{r,t}-\operatorname{sg}[\mathbf u_{\rm tgt}]\|_F^2$
	\ENDIF
	\STATE Update $\vartheta$ using $\mathcal L_{\rm iv}$ if $e\leq E_1$, and $\mathcal L_{\rm av}$ otherwise
	\ENDFOR
	\ENDFOR
		
	\end{algorithmic}
\end{algorithm}		


The offline training and online inference procedures of FBBS are summarized in \textbf{Algorithm} \textbf{\ref{algorithm1}} and \textbf{Algorithm} \textbf{\ref{algorithm2}}, respectively. During training, the stochastic masking strategy in Section III-C is applied on-the-fly to each mini-batch. Stage I minimizes $\mathcal{L}_{\rm iv}$ over the first $E_1$ epochs to learn the instantaneous velocity field, while Stage II continues from the Stage-I model and minimizes $\mathcal{L}_{\rm av}$ over the remaining epochs to learn the interval-conditioned average velocity. Applying the same masking strategy in both stages enables a single model to support variable probing budgets.
	
At deployment, the BS probes $Q$ DFT beams and constructs the masked prompt $(\bm{{\rm c}}_k,\bm{{\rm m}}_k)$. Conditioned on this prompt, FBBS evolves $M$ independent prior states through $T$ average velocity-guided updates. Each terminal state $\bm{{\rm \hat{X}}}_1^{k,m}$ is mapped to a constant-modulus beamforming candidate according to
\begin{equation}\label{eq:beam_recovery}
\left\{
\begin{aligned}
	&\bm{{\rm h}}^{\mathcal{A}}_{k,m} = \bm{{\rm \hat{X}}}_1^{k,m}[1, :] + j\bm{{\rm \hat{X}}}_1^{k,m}[2, :],\\
	&\bm{{\rm w}}_{k,m} = \frac{1}{\sqrt{N_t}}{\rm e}^{j\angle \mathcal{F}^{-1}(\bm{{\rm h}}^{\mathcal{A}}_{k,m})}.
\end{aligned}
\right.
\end{equation}
The $M$ candidates are then probed, and the one yielding the highest RSRP is selected. Therefore, the brainstorming number $M$ serves as a practical control knob between online overhead and beamforming robustness: a small $M$ is suitable for latency-sensitive access, whereas a moderately larger $M$ is beneficial when the system can afford slightly more verification cost to pursue higher beamforming gain. \textbf{Proposition} 2 explains the effectiveness of beam brainstorming in improving beamforming gain.

\begin{algorithm}[t]
	\caption{Online Inference of FBBS}
	\label{algorithm2}
	\begin{algorithmic}[1]
		\STATE \textbf{Input:} probing budget $Q$, generation steps $T$, brainstorm number $M$, and trained predictor $\bm{{\rm u}}^\vartheta$
		\STATE \textbf{Output:} Selected beamforming vector $\widehat{\bm w}_k$
		
		\STATE Probe $Q$ uniformly spaced DFT beams and form the prompt $(\bm{{\rm c}}_k,\bm{{\rm m}}_k)$
		\STATE Draw $\bm{{\rm X}}_{0}^{k,m}\overset{\mathrm{i.i.d.}}{\sim}\mathcal N(\bm 0,\bm I)$ for all $m\in\{1,\ldots,M\}$ and set $\Delta=1/T$, $\bm{{\rm \hat{X}}}_0^{k,m}\leftarrow \bm{{\rm X}}_{0}^{k,m}$
		
		\FOR{$t=0$ to $T-1$}
		\STATE $\bm{{\rm \hat{X}}}_{(t+1)\Delta}^{k,m}\!\leftarrow\!\bm{{\rm \hat{X}}}_{t\Delta}^{k,m}\!\!+\!\!{\mathbf u}^{\vartheta}\!\left(\!\bm {{\rm \hat{X}}}_{t\Delta}^{k,m},\!t\Delta,(t\!+\!1)\Delta; \widetilde{\bm {{\rm c}}}_k,\bm{{\rm m}}_k\right)\!\!\Delta$
		\ENDFOR
		
		\STATE Recover $\{\bm{{\rm w}}_{k,m}\}_{m=1}^{M}$ from $\{\bm{{\rm \hat{X}}}_{1}^{k,m}\}_{m=1}^{M}$ using (\ref{eq:beam_recovery})
		
		\STATE Probe $\{\bm{{\rm w}}_{k,m}\}_{m=1}^{M}$
		\STATE Set $m^\star\leftarrow\arg\max_m\operatorname{RSRP}(\bm{{\rm w}}_{k,m})$
		\STATE \textbf{return} $\widehat{\bm{{\rm w}}}_k\leftarrow\bm{{\rm w}}_{k,m^\star}$
	\end{algorithmic}
\end{algorithm}

\begin{proposition}[Probabilistic Coverage of Beam Brainstorming]
	\label{prop:brainstorm_coverage}
	For UE $k$ and its wireless prompt, let $\bar{g}_{k,m}$ denote the normalized beamforming gain of candidate $m$. Conditioned on the wireless prompt, suppose that $\{\bar{g}_{k,m}\}_{m=1}^M$ are independently and identically distributed. For a target gain $\gamma$, define the single-candidate success probability as
	$p_{\gamma,k}\triangleq\Pr(\bar{g}_{k,m}\geq\gamma)$. Assuming reliable candidate selection, the probability that the selected beam achieves the target gain is
	\begin{equation}\label{eq:brainstorm_coverage}
		\setlength\abovedisplayskip{3pt}
		\setlength\belowdisplayskip{3pt}
		P_{\gamma,k}^{(M)}=1-(1-p_{\gamma,k})^M.
	\end{equation}
	Moreover, the marginal gain from adding one additional candidate beam is
	\begin{equation}
		\setlength\abovedisplayskip{3pt}
		\setlength\belowdisplayskip{3pt}
		P_{\gamma,k}^{(M+1)}-P_{\gamma,k}^{(M)}=p_{\gamma,k}(1-p_{\gamma,k})^M,
	\end{equation}
	which is non-negative and decreases with $M$.
\end{proposition}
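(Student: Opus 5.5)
The argument is elementary and I would structure it around the complementary event. First I would make precise what ``reliable candidate selection'' means. Since Algorithm~\ref{algorithm2} returns $m^\star=\arg\max_m \operatorname{RSRP}(\bm{{\rm w}}_{k,m})$, I interpret reliability as the statement that the RSRP ranking coincides with the ranking of normalized gains. The selected beam therefore attains $\max_{m}\bar{g}_{k,m}$. Under this reading, the event that the selected beam achieves the target is exactly $\{\max_{1\le m\le M}\bar{g}_{k,m}\ge\gamma\}$. This equals $\bigcup_{m=1}^M\{\bar{g}_{k,m}\ge\gamma\}$.

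Next I would pass to the complement:
\begin{equation*}
\{\max_m \bar{g}_{k,m}<\gamma\}=\bigcap_{m=1}^M\{\bar{g}_{k,m}<\gamma\}.
\end{equation*}
Conditioned on the wireless prompt, the candidates $\bar{g}_{k,m}$ are assumed i.i.d. The probability of the intersection therefore factorizes into $\prod_{m=1}^M \Pr(\bar{g}_{k,m}<\gamma)$. Each factor equals $1-p_{\gamma,k}$, so the product is $(1-p_{\gamma,k})^M$. This yields \eqref{eq:brainstorm_coverage}. The i.i.d.\ assumption is natural here: the prior states $\bm{{\rm X}}_0^{k,m}$ are drawn i.i.d., and the deterministic map from prior state to beam, for a fixed prompt, preserves independence and identical distribution.

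For the marginal gain, I would subtract the two instances of \eqref{eq:brainstorm_coverage}:
\begin{equation*}
(1-p_{\gamma,k})^M-(1-p_{\gamma,k})^{M+1}=(1-p_{\gamma,k})^M p_{\gamma,k}.
\end{equation*}
Non-negativity follows because $p_{\gamma,k}\in[0,1]$. For monotonicity, the ratio of consecutive increments is $1-p_{\gamma,k}\in[0,1]$. The increment is therefore non-increasing in $M$, and strictly decreasing whenever $0<p_{\gamma,k}<1$. I would state this caveat for the degenerate cases $p_{\gamma,k}\in\{0,1\}$, where the increment is identically zero beyond the first candidate.

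The only delicate step is the first one: formalizing reliable selection, so that the selected gain is the maximum over the candidates. I would state it explicitly as an assumption that the RSRP ordering matches the gain ordering, which holds for instance in the noiseless case where $\operatorname{RSRP}\propto g_k$. I would not try to prove it.
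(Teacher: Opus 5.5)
Your proposal is correct and follows the paper's argument: the target is missed only if all $M$ conditionally independent candidates fail, which has probability $(1-p_{\gamma,k})^M$, and the marginal gain follows by subtracting the two instances of \eqref{eq:brainstorm_coverage}. Your extra care is a valid refinement. You state ``reliable selection'' explicitly as the RSRP ordering matching the gain ordering. You also note that for $p_{\gamma,k}\in\{0,1\}$ the increment is only non-increasing, not strictly decreasing.
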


\begin{IEEEproof}
	The selected beam fails to reach $\gamma$ only when all $M$ independently generated candidates fail, whose probability is $(1-p_{\gamma,k})^M$. The marginal improvement follows directly from (\ref{eq:brainstorm_coverage}).
\end{IEEEproof}

\textbf{Proposition}~\ref{prop:brainstorm_coverage} indicates that increasing $M$ improves gain coverage but yields diminishing returns. Hence, to guarantee $P_{\gamma,k}^{(M)}\geq1-\eta$, the minimum required brainstorming number is $M_{\min}=\left\lceil\frac{\log\eta}{\log(1-p_{\gamma,k})}\right\rceil$ when $0<p_{\gamma, k}<1$ and $0<\eta<1$.

\section{Numerical Results}\label{Section4}
\subsection{Scenarios}\label{Section4.A}
To comprehensively evaluate the proposed FBBS under diverse propagation conditions, three ray-tracing datasets are considered in this work. Specifically, we first construct HKU\_28, a campus dataset of The University of Hong Kong (HKU) generated using SionnaRT\cite{hoydis2023sionna}, and adopt two public scenarios, namely O1B\_28 and I2\_28B, from the DeepMIMO dataset\cite{alkhateeb2019deep}. These three datasets cover both indoor and outdoor environments, as well as LoS and NLoS propagation conditions, thereby providing a comprehensive testbed for assessing the robustness and practical effectiveness of the proposed FBBS. 

\subsubsection{HKU\_28 Scenario}
As shown in Fig. \ref{fig:scenarios}(a), we created the HKU\_28 dataset, which models the campus environment of HKU and features a mix of high-rise buildings, streets, and open spaces. Specifically, the 3D model of the HKU campus was constructed by importing OpenStreetMap (OSM) data into Blender. Subsequently, ray-tracing simulations were performed on this scene using Sionna RT to generate the dataset. The BS is located at a height of 15 meters, and a total of 82169 users are activated in this scenario. The carrier frequency is 28 GHz.

\subsubsection{DeepMIMO O1B\_28 Scenario}
The DeepMIMO O1B\_28 dataset models an outdoor street environment with blockage and reflections, as shown in Fig. \ref{fig:scenarios}(b). Specifically, a portion of the O1\_28 dataset corresponding to BS \#3 and user grid \#1 is selected. Therefore, the total population of 484264 users consists of those with LoS links to the BS and those with NLoS links. The carrier frequency is 28 GHz.

\subsubsection{DeepMIMO I2\_28B Scenario}
As illustrated in Fig. \ref{fig:scenarios}(c), the DeepMIMO I2\_28B dataset models an indoor scenario where none of the users has a LoS connection with the BS. The BS is placed on the inside wall of the room, and a total of 140901 users are distributed in the brown-colored area located behind the blockage. The carrier frequency is 28 GHz.

In this work, static ray-tracing datasets are used to evaluate the performance of FBBS. In practice, site-specific training pairs can be obtained either from calibrated digital twins or through periodic site surveys and channel sounding campaigns. The model is intended to capture slowly varying propagation structures rather than individual transient channel realizations. Therefore, short-term changes caused by UE mobility or temporary blockage can generally be handled through updated wireless prompts, and their effects are partially examined through noisy- and outdated prompt evaluations in Sections~\ref{sec:noisy_prompts} and \ref{sec:outdated_prompts}. In contrast, persistent changes that alter the underlying channel distribution may require fine-tuning or retraining. The cross-site generalizability of FBBS is further evaluated in Table~\ref{tab:cross_site_generalization}. Therefore, no universal update interval is assumed. Model updates may instead be triggered by detected long-term site changes or sustained performance degradation. The corresponding maintenance cost is site-specific and mainly arises from updating the site data and performing occasional offline model adaptation.

\begin{figure}[t]
	\centering
	\includegraphics[width=3.3in]{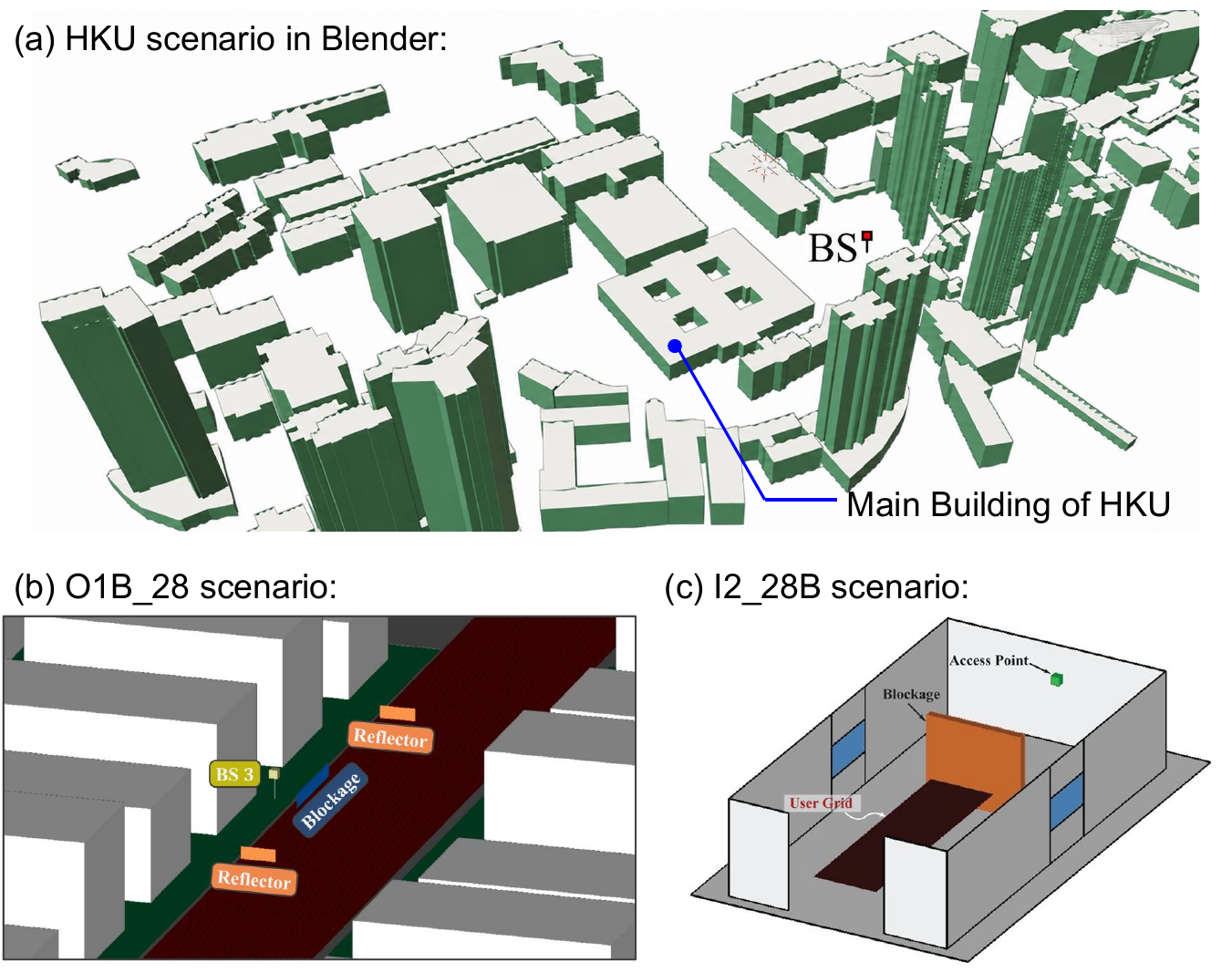}
	\caption{Illustration of the scenarios, (a) HKU\_28, (b) O1B\_28, and (c) I2\_28B.}
	\label{fig:scenarios}
\end{figure}

\subsection{Simulation Setup}
For each scenario, up to 100k user samples are randomly selected, with 80k used for training and the remainder for testing. As introduced in Section \ref{section3.B}, the velocity predictor is implemented by a customized 1D-DiT with embedding dimension 512, 5 transformer blocks, and 16 attention heads, and is trained for 300 epochs using AdamW with learning rate $2\times 10^{-4}$, weight decay 0.1, and batch size 32. The training follows a two-stage strategy: the first 150 epochs learn the standard instantaneous velocity, and the remaining 150 epochs fine-tune the model under split-consistency objective in Section \ref{section3.A} with $p=0.7$. To support variable probing budgets, 50\% of the training samples in each batch use the full 64-dimensional prompt, while for the remaining, only $Q$ uniformly spaced RSRP measurements are retained and the others are masked out. Additionally, exponential moving average (EMA) is adopted to update the parameters of the model with decay coefficient 0.995. For each scenario, the model was trained on an NVIDIA GeForce RTX 4060 with 8 GB of video random access memory (VRAM) for approximately 11 hours. With $N_t=Q_{{\rm max}}=64$ and $T=1$, the model contains 33.53M trainable parameters and requires around 2.88 Giga floating-point operations (GFLOPs) to generate a single beam candidate. Since the $M$ candidate states are processed jointly while sharing the prompt and time embeddings, the total complexity of $T$-step beam generation is approximately $T(2.785M+0.092)$ GFLOPs. Moreover, the computational cost is independent of the probing budget $Q$ since stochastic masking preserves the fixed $Q_{\rm max}$-dimensional condition representation. The detailed parameters are summarized in Table~\ref{table1}.

\begin{table}[t]
	\caption{Simulation Parameters}
	\label{table1}
	\centering
	\begin{tabular}{|c|c|}
		\hline
		Name of scenario & HKU\_28, O1B\_28, I2\_28B\\
		Carrier frequency $f_c$ & 28 GHz \\
		Bandwidth BW & 100 MHz \\
		Number of antennas $N_t$ & 64 \\
		Antenna array & ULA \\
		Antenna spacing & Half-wavelength spacing \\
		UE antenna & Single \\
		Antenna element & Isotropic \\
		Number of paths & 5 \\
		Learning rate & $2\times 10^{-4}$ \\
		AdamW Weight decay & 0.1 \\
		Batch size $B$ & 32 \\
		Number of epochs $E$ & 300 \\
		Probing budget set $\mathcal{Q}_p$ & \{9, 10, $\cdots$, 63, 64\} \\
		Stage I epochs $E_1$ & 150 \\
		$p$ & 0.7 \\
		$p_{{\rm full}}$ & 0.5 \\
		EMA decay coefficient & 0.995 \\
		\hline
	\end{tabular}
\end{table}

\subsection{Intuition Behind the Generated Beams}
Before performing the quantitative evaluation, we first provide an intuitive inspection of the generated beam pattern. Fig. \ref{fig:beam_patterns} compares the optimal beams in the HKU\_28, O1B\_28, and I2\_28B scenarios, shown in Figs. \ref{fig:beam_patterns}(a), (b), and (c), respectively, with their FBBS-generated counterparts in the second column, where $Q=15$, $M=5$ and $T=3$. It is observed that FBBS effectively captures the dominant angular structures of the optimal beams in HKU\_28 and O1B\_28. More interestingly, although the generated beam in Fig. \ref{fig:beam_patterns}(f) differs visibly from the optimal pattern in Fig. \ref{fig:beam_patterns}(c), it still achieves a normalized beamforming gain of $-0.3986$ dB, corresponding to approximately $91.2\%$ of the optimal received power. This observation suggests that FBBS does not simply reproduce the optimal beam pattern point by point, but instead synthesizes a \textit{functionally near-optimal beam} under the guidance of RSRP prompts.

\begin{figure}[t]
	\centering
	\includegraphics[width=3.5in]{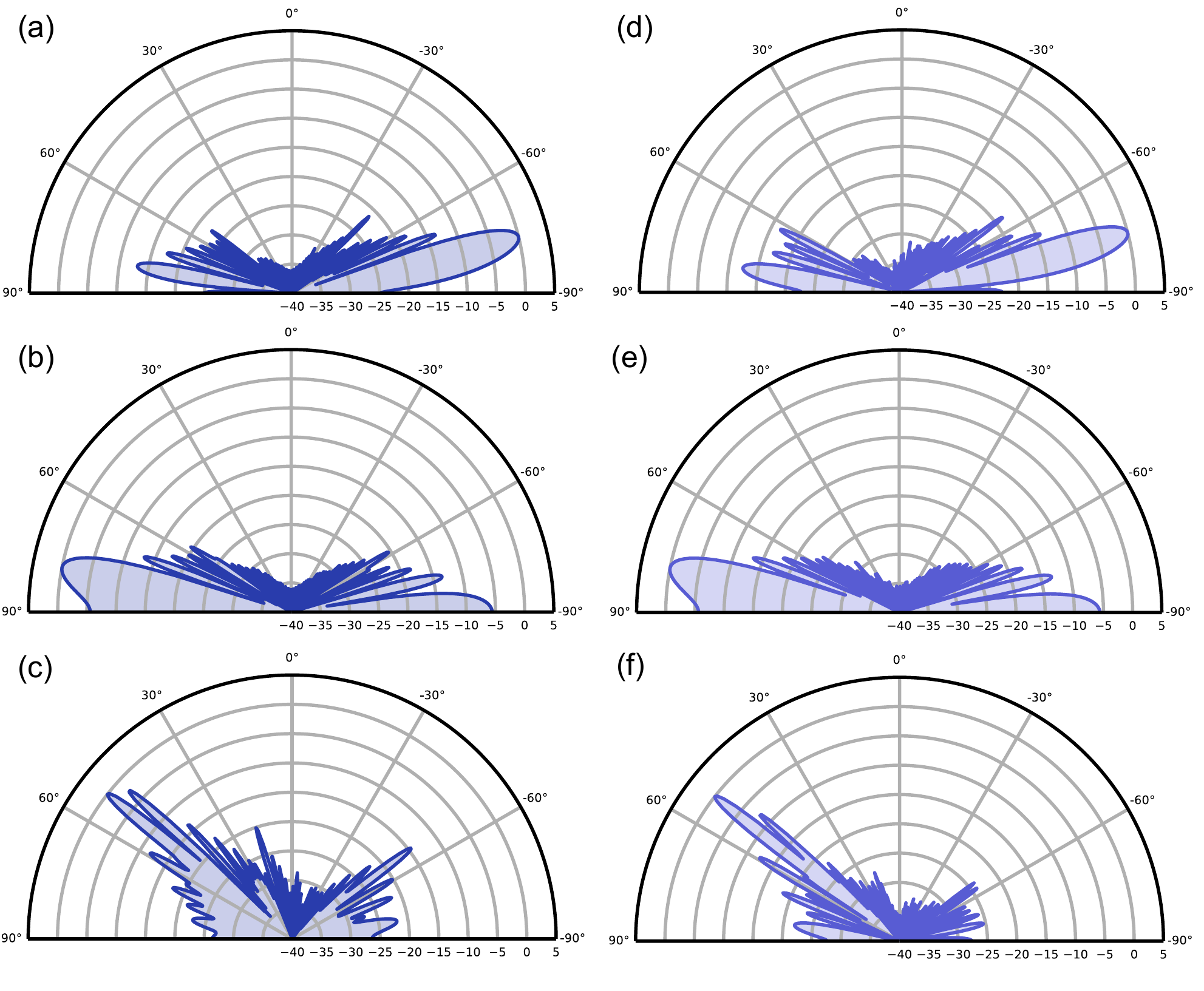}
	\caption{Beam patterns: optimal (a)-(c) versus generated (d)–(f).}
	\label{fig:beam_patterns}
\end{figure}

\subsection{Trade-off Between Gain and Probing Overhead}\label{sec:tradeoff_gain_probing_overhead}
\begin{figure}[t]
	\centering
	\includegraphics[width=3.5in]{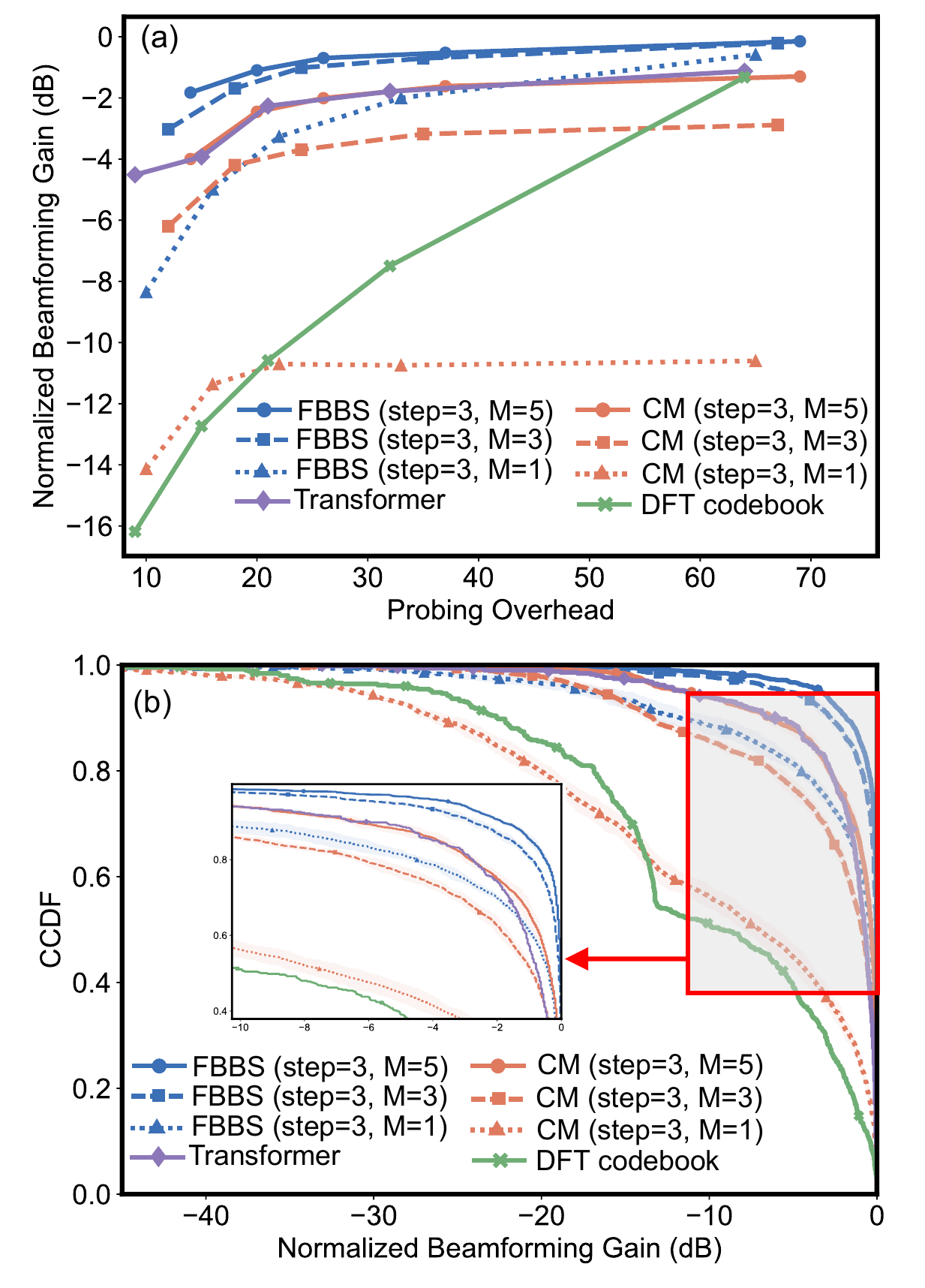}
	\caption{Beamforming performance in the HKU\_28 scenario: (a) average normalized beamforming gain versus the total probing overhead; (b) CCDF of the normalized beamforming gain with $Q=21$.}
	\label{fig:hku28_tradeoff_and_CCDF}
\end{figure}

\begin{figure}[t]
	\centering
	\includegraphics[width=3.5in]{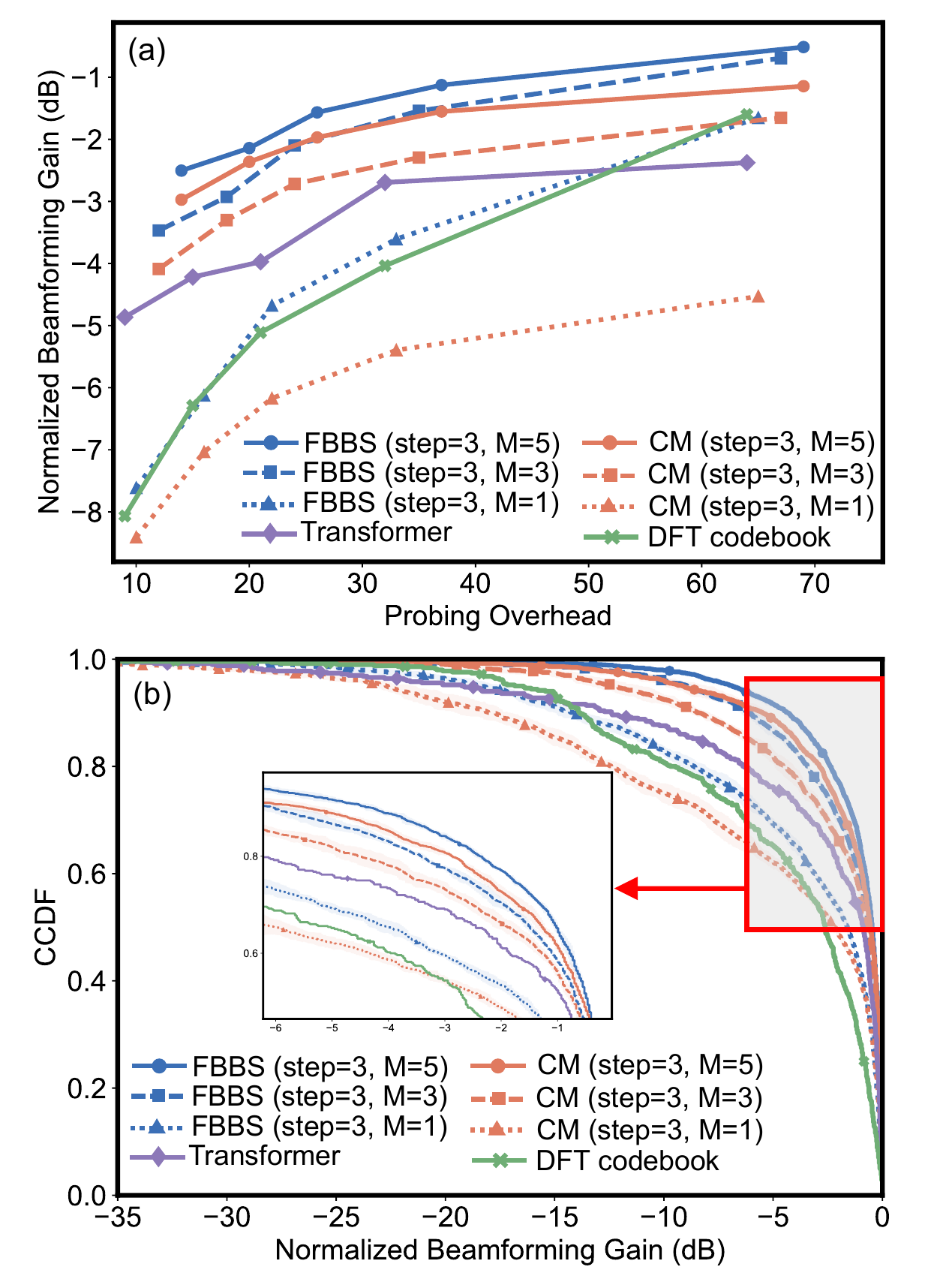}
	\caption{Beamforming performance in the O1B\_28 scenario: (a) average normalized beamforming gain versus the total probing overhead; (b) CCDF of the normalized beamforming gain with $Q=21$.}
	\label{fig:o1b28_tradeoff_and_CCDF}
\end{figure}

\begin{figure}[t]
	\centering
	\includegraphics[width=3.5in]{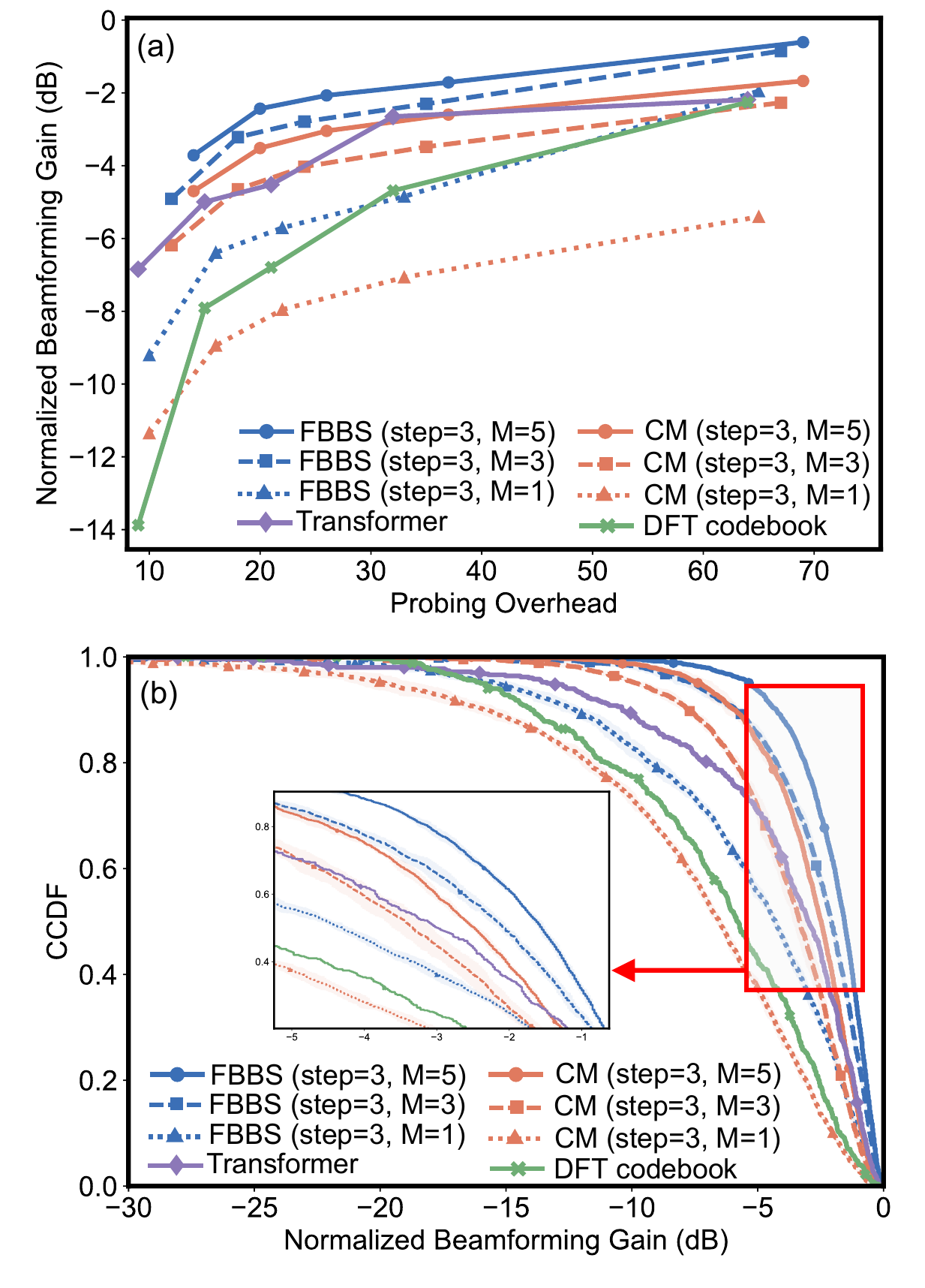}
	\caption{Beamforming performance in the I2\_28B scenario: (a) average normalized beamforming gain versus the total probing overhead; (b) CCDF of the normalized beamforming gain with $Q=21$.}
	\label{fig:i228b_tradeoff_and_CCDF}
\end{figure}

Qualitative analysis helps visually examine whether the proposed FBBS is capable of generating meaningful beams, whereas quantitative analysis is needed to more clearly reveal the actual performance gain it can achieve. In this work, the normalized beamforming gain is used as the main metric since it is positively associated with SNR. It can be calculated as
\begin{equation}
	\setlength\abovedisplayskip{3pt}
	\setlength\belowdisplayskip{3pt}
	\bar{g}_k(\bm{{\rm w}}) = 10{\rm log}_{10}\left(\frac{\vert \bm{{\rm h}}_k^{\rm H}\bm{{\rm w}}\vert^2}{\vert \bm{{\rm h}}_k^{\rm H}\bm{{\rm w}}_k^{\star}\vert^2}\right),
\end{equation}
where $\bm{{\rm w}}_k^{\star}$ denotes the optimal constant-modulus beamformer under the analog beamforming constraint in (\ref{eq:feasible_beam_set}). For comparison, we implement a consistency model-based GenSSBF, referred to as CM \cite{Song2023consistency}, and a Transformer-based beam predictor. Both baselines employ the same backbone as FBBS but were trained with different objectives. The upper panels of Figs. \ref{fig:hku28_tradeoff_and_CCDF} to \ref{fig:i228b_tradeoff_and_CCDF} demonstrate that FBBS consistently provides the most favorable tradeoff between beamforming gain and probing overhead across the considered scenarios. Its advantage is particularly pronounced in the low-overhead regime, where sparse RSRP measurements provide insufficient angular resolution for conventional codebook search but can still serve as informative prompts for generating user-specific beams. For example, in HKU\_28, with $Q=15, M=5$, FBBS achieves a normalized beamforming gain of -1.10 dB with a total overhead of $Q+M=20$, under three generation steps. Under the same overhead and configuration with FBBS, CM-based GenSSBF yields only -2.46 dB, and Transformer-based discriminative beam prediction method delivers -2.26 dB at $Q=21$. In contrast, the 32-beam DFT search attains merely -7.5 dB, despite consuming 32 probing beams. These results illustrate that with average-velocity field learning, FBBS converts a compact RSRP prompt into a high-quality user-specific beam more effectively. Moreover, with the proposed condition masking mechanism, a single FBBS model can flexibly adapt to different probing budgets without retraining. From Figs. \ref{fig:hku28_tradeoff_and_CCDF}(a), \ref{fig:o1b28_tradeoff_and_CCDF}(a), and  \ref{fig:i228b_tradeoff_and_CCDF}(a), we also notice that \textit{mild brainstorming} (e.g., $M$ from 1 to 3) is already sufficient to harvest most of the performance improvement, indicating that FBBS provides a practical knob to balance latency and beamforming accuracy: \textit{a small $M$ is suitable for fast access, whereas a moderately larger $M$ is beneficial when the system can afford slightly more overhead to pursue higher gain}.

The lower panels of Figs. \ref{fig:hku28_tradeoff_and_CCDF} to \ref{fig:i228b_tradeoff_and_CCDF} present the complementary cumulative distribution function (CCDF) of the normalized beamforming gain, where the $y$-axis represents the probability that a randomly selected UE achieves a gain higher than a given value on the $x$-axis. Therefore, a curve located farther toward the upper-right indicates that a larger proportion of UEs attain high beamforming gains. At $Q=21$, FBBS with $M=5$ consistently exhibits the most favorable CCDF across all three scenarios, particularly in the high-gain region highlighted by the insets. Increasing $M$ also produces a clear rightward shift, demonstrating that brainstorming improves not only the average gain but also user-level reliability by reducing the likelihood of poor beamforming outcomes.

\begin{table}[t]
	\centering
	\caption{Zero-shot cross-site performance of FBBS trained only 
		on HKU\_28 and directly evaluated on unseen scenarios without 
		fine-tuning, where $T=3$.}
	\label{tab:cross_site_generalization}
	\scriptsize
	\setlength{\tabcolsep}{3.2pt}
	\renewcommand{\arraystretch}{1.12}
	\begin{tabular}{ccrrrrr}
		\toprule
		\multirow{2}{*}{$M$}
		& \multirow{2}{*}{\textbf{Test Site}}
		& \multicolumn{5}{c}{\textbf{Probing Budget $Q$}} \\
		\cmidrule(lr){3-7}
		& & \textbf{9} & \textbf{15} & \textbf{21}
		& \textbf{32} & \textbf{64} \\
		\midrule
		\multirow{2}{*}{5}
		& O1B\_28
		& $-4.526$ & $-3.746$ & $-3.103$ & $-2.581$ & $-1.593$ \\
		& I2\_28B
		& $-7.273$ & $-3.737$ & $-3.582$ & $-2.948$ & $-2.377$ \\
		\midrule
		\multirow{2}{*}{8}
		& O1B\_28
		& $-3.730$ & $-3.162$ & $-2.689$ & $-2.170$ & $-1.369$ \\
		& I2\_28B
		& $-5.849$ & $-3.083$ & $-3.073$ & $-2.547$ & $-2.105$ \\
		\bottomrule
	\end{tabular}
\end{table}

Moreover, we evaluate the zero-shot cross-site transferability of FBBS, as shown in Table~\ref{tab:cross_site_generalization}. The model trained only on HKU\_28 is directly applied to O1B\_28 and I2\_28B without retraining or fine-tuning. Despite the substantial changes in site geometry and propagation characteristics, FBBS retains meaningful beamforming performance in both unseen environments. In particular, it achieves normalized gains of $-1.369$ dB in O1B\_28 and $-2.105$ dB in the fully NLoS I2\_28B scenario with $Q=64, M=8$. These results suggest that FBBS does not merely memorize the site-specific user distribution, but learns the underlying relationships between RSRP observations and beam structures.

\begin{figure}[t]
	\centering
	\includegraphics[width=3.5in]{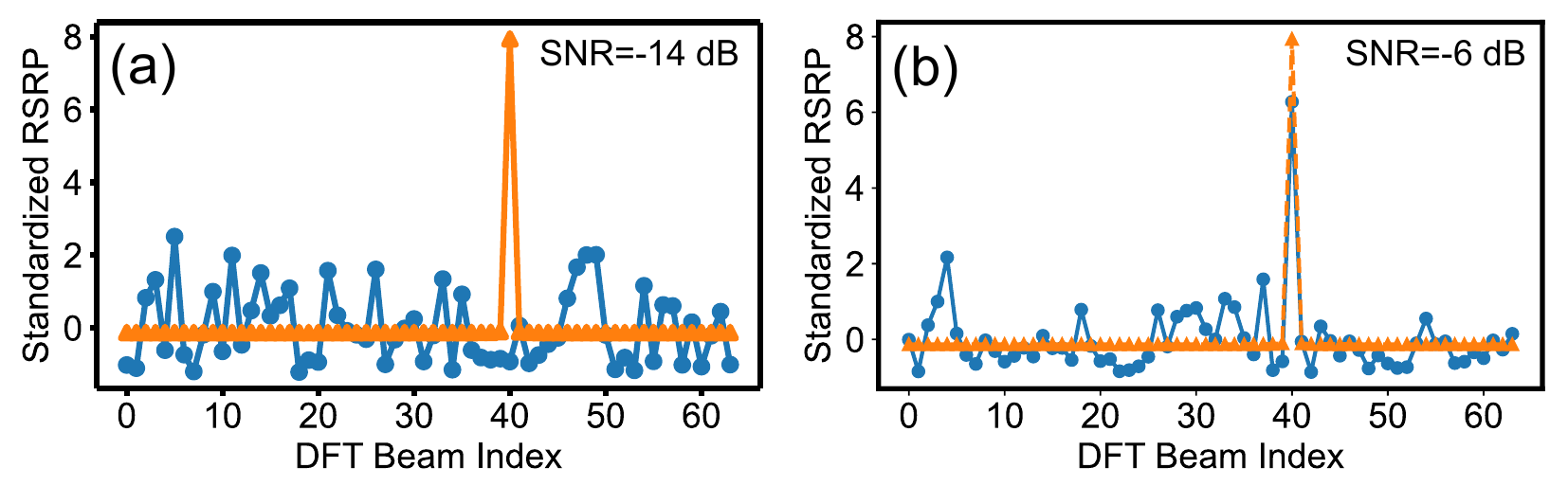}
	\caption{Noisy RSRP conditions in HKU\_28.}
	\label{fig:noisy_condition}
\end{figure}

\begin{figure}[h]
	\centering
	\includegraphics[width=3.5in]{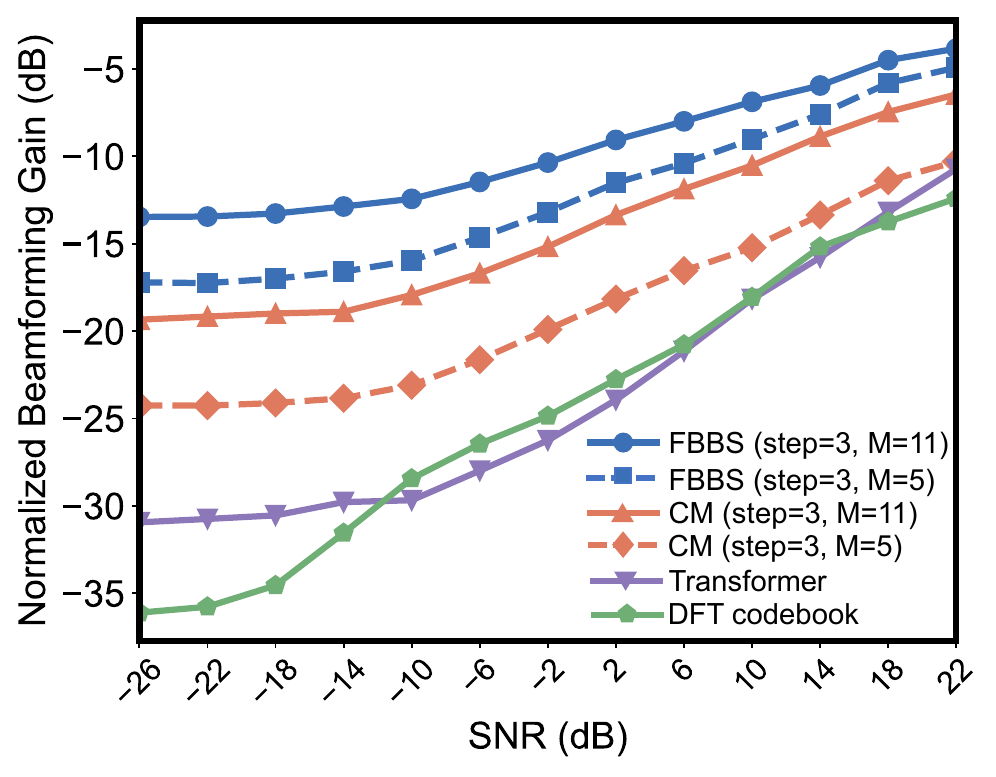}
	\caption{Normalized beamforming gain versus RSRP SNR in the HKU\_28 scenario, where $Q=32$.}
	\label{fig:hku28_noisy_conditions}
\end{figure}

\begin{figure}[t]
	\centering
	\includegraphics[width=3.5in]{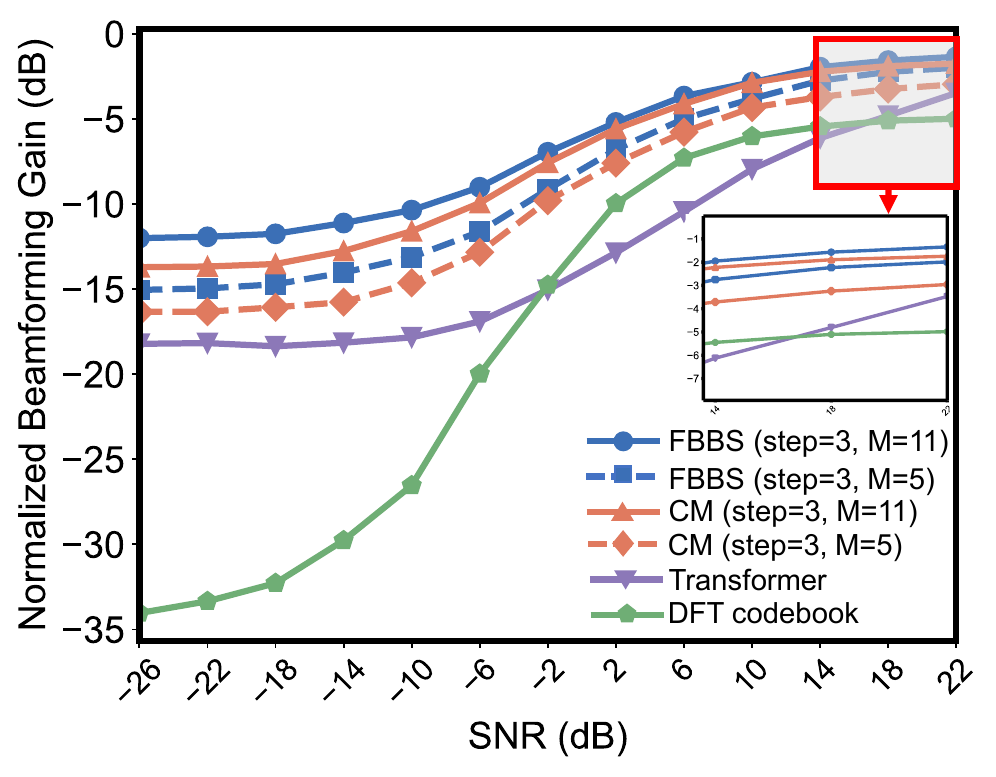}
	\caption{Normalized beamforming gain versus RSRP SNR in the O1B\_28 scenario, where $Q=32$.}
	\label{fig:o1b28_noisy_conditions}
\end{figure}

\begin{figure}[t]
	\centering
	\includegraphics[width=3.5in]{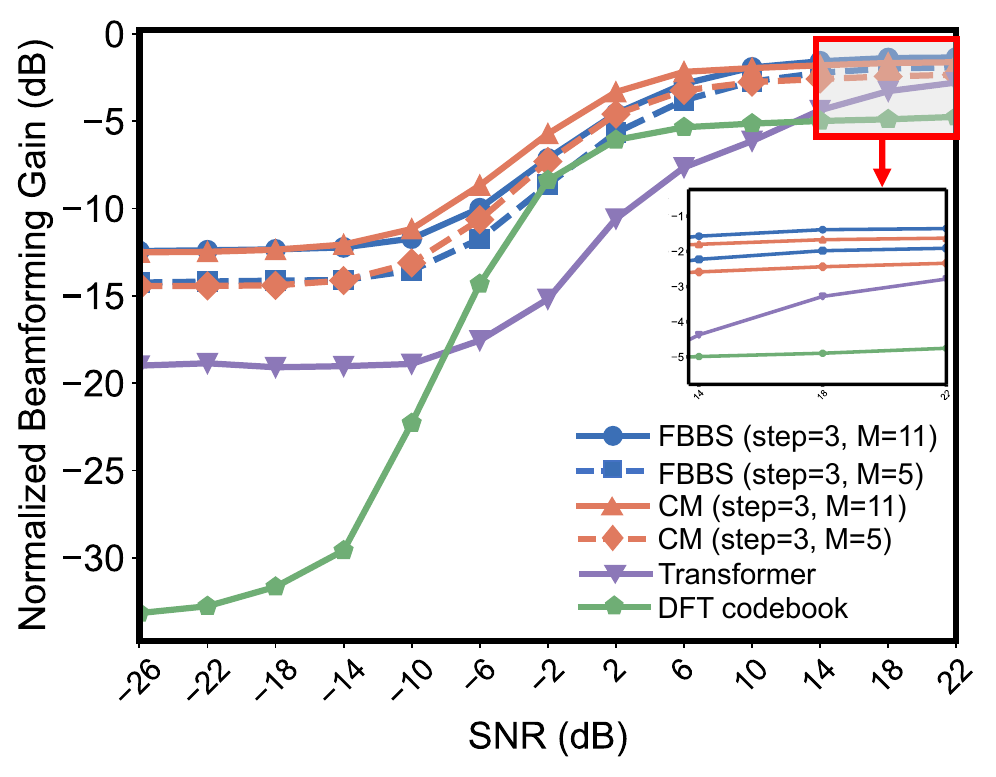}
	\caption{Normalized beamforming gain versus RSRP SNR in the I2\_28B scenario, where $Q=32$.}
	\label{fig:i228b_noisy_conditions}
\end{figure}

\subsection{The Impact of Noisy Wireless Prompts}\label{sec:noisy_prompts}
To evaluate the practicability of the proposed FBBS, it is important to test the performance under noisy environments. First, Fig. \ref{fig:noisy_condition} visualizes how noise affects the RSRP measurements. Figs. \ref{fig:hku28_noisy_conditions}--\ref{fig:i228b_noisy_conditions} show the normalized beamforming gain performance under different SNR levels. In this subsection, we set $Q=32$ and $T=3$ for FBBS and CM-based GenSSBF. It can be observed that although trained using noise-free wireless prompts, FBBS exhibits strong resilience to noisy conditions. Across all three scenarios, both FBBS configurations ($M\in \{5, 11\}$) consistently outperform the 32-beam Transformer-based beam prediction and DFT codebook search baselines. This advantage is particularly pronounced in the low-SNR regime, where direct codebook sweeping becomes highly susceptible to noise-induced beam-ranking errors, whereas FBBS can still exploit the learned site-specific channel distribution when the prompt contains limited information. The deterministic Transformer is also more vulnerable to this distribution shift because it produces only a single prediction from the corrupted input. Increasing the brainstorm number from $M=5$ to $M=11$ provides larger gain at low SNR: when SNR is -26 dB, the improvement ranges from approximately 1.8 to 3.7 dB across the three scenarios, whereas it decreases to only 0.6--1.1 dB when SNR is 22 dB. This observation indicates that candidate diversity can serve as an uncertainty-compensation mechanism: \textit{additional candidates are most valuable when noise makes the prompt ambiguous, while their marginal benefit diminishes once the prompt becomes sufficiently reliable.} However, it should be re-emphasized that this phenomenon only holds when the model has effectively learned the environmental information.

\subsection{The Impact of Outdated Prompts}\label{sec:outdated_prompts}
In this subsection, we investigate how the model reacts when the wireless prompt becomes outdated. In practice, RSRP measurements may be outdated due to UE mobility, channel fading, reporting delay and sudden blockage. A complex evaluation that considers both trajectory and geometry under continuous UE movement is beyond the scope of this paper and is left for future work. Instead, we conduct a preliminary controlled prompt aging test to assess whether the ``old'' condition remains effective in a new channel. Specifically, denote the channel of the $k$-th UE at time $t_0$ as $\bm{{\rm h}}_k(t_0)$. To
emulate channel variation, we construct a new channel as
\begin{equation}
	\setlength\abovedisplayskip{3pt}
	\setlength\belowdisplayskip{3pt}
	\bm{{\rm h}}_k^{\rm new} = \rho \bm{{\rm h}}_k(t_0) + \sqrt{1-\rho^2} \bm{{\rm e}}_k,
\end{equation}
where $\rho\in [0, 1]$ controls the correlation between $\bm{{\rm h}}_k^{\rm new}$ and $\bm{{\rm h}}_k(t_0)$. Smaller $\rho$ corresponds to more severe prompt aging. $\bm{{\rm e}}_k$  is an independent random perturbation with the same average power as $\bm{{\rm h}}_k(t_0)$. We keep the RSRP measurements obtained from $\bm{{\rm h}}_k(t_0)$, and the generated beams are evaluated on $\bm{{\rm h}}_k^{\rm new}$. In the simulation, we evaluate the generated beams under different channel correlation levels $\rho$.

\begin{figure*}[t]
	\centering
	\includegraphics[width=7.2in]{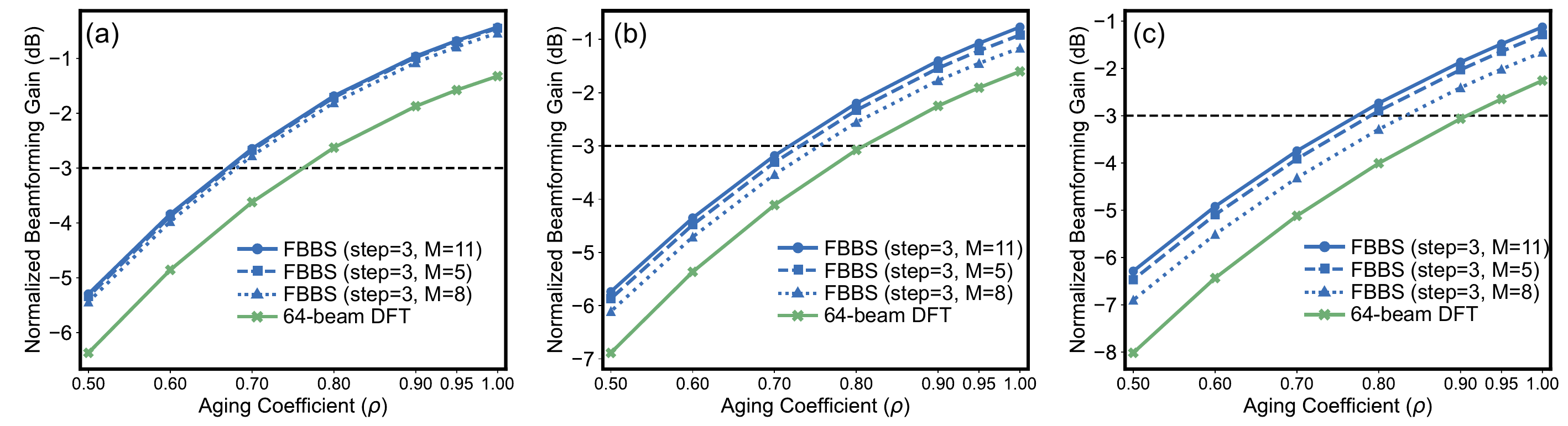}
	\caption{Normalized beamforming gain under different channel correlation levels in (a) HKU\_28, (b) O1B\_28, and (c) I2\_28B.}
	\label{fig:prompt_aging}
\end{figure*}

As shown in Fig. \ref{fig:prompt_aging}, the normalized beamforming gain decreases monotonically as $\rho$ decreases, confirming that increasingly outdated RSRP measurements provide less reliable guidance for beam generation. Nevertheless, FBBS consistently outperforms the 64-beam DFT baseline across all tested correlation levels and scenarios. At $\rho=0.5$, FBBS with $Q=21, M=11$ retains gains of approximately 1.08, 1.15, and 1.73 dB over DFT in HKU\_28, O1B\_28, and I2\_28B, respectively. With $-3$ dB as a reference, FBBS remains above this level down to $\rho=0.7$ in HKU\_28 and $\rho=0.8$ in O1B\_28 and I2\_28B, whereas DFT falls below it at $\rho=0.7$, $0.8$, and $0.9$, respectively. However, FBBS still loses approximately $5$ dB as $\rho$ decreases from 1 to 0.5. Therefore, designing a conditioning mechanism that is robust to outdated wireless prompts also represents an important research direction.

\subsection{Trade-off Between Gain and Generation Overhead}
In addition to the probing overhead investigated in Section~\ref{sec:tradeoff_gain_probing_overhead}, the online practicality of GenSSBF also depends critically on beam generation overhead, e.g., inference latency. For fixed antenna size and prompt dimensions, the generation overhead is essentially independent of the scenario and is determined primarily by the inference paradigm, the number of generation steps, and the backbone model. We therefore select O1B\_28 as a representative scenario and set $Q=21$ and $M=1$. As reported in Table~\ref{tab:gain_latency_tradeoff}, the proposed FBBS achieves a normalized beamforming gain of $-4.337$ dB with only one generation step and an average latency of $5.322$ ms. Its performance remains stable over $T=1$--$4$, with a variation of at most $0.301$ dB, indicating that FBBS has the capability to generate a high-quality beam in a single step. In contrast, standard flow matching (FM)\footnote{The FM and CM baselines employ the same customized 1D-DiT backbone and condition encoder as FBBS, differing only in their training objectives and corresponding sampling procedures.} \cite{lipman2022flow} improves markedly with $T$ but reaches only $-5.557$ dB with four steps and requires $23.256$ ms. Thus, one-step FBBS provides a $1.220$ dB higher gain while reducing latency by $77.1\%$. CM also demonstrates the ability of single-step generation, but its beamforming gain remains lower than that of FBBS. Even with $1000$ steps, diffusion-based BBS achieves only $-5.681$ dB and incurs $6.206$ s of latency, over three orders of magnitude higher than one-step FBBS. These results show that learning average velocity substantially reduces the reliance on iterative beam refinement, enabling FBBS to achieve the most favorable beamforming gain--generation overhead tradeoff.

\begin{table*}[t]
	\centering
	\caption{Beamforming gain--generation overhead trade-off of different
		GenSSBF methods (O1B\_28, $Q=21, M=1$). Each entry is reported as $(T,\bar{g},\tau)$, where $T$ denotes the number of generation steps, $\bar{g}$ is the normalized beamforming gain in dB, and $\tau$ is the average generation latency.}
	\label{tab:gain_latency_tradeoff}
	\setlength{\tabcolsep}{8pt}
	\renewcommand{\arraystretch}{1.15}
	\begin{tabular}{lcccc}
		\toprule
		FBBS
		& $(1,\,-4.337,\,5.322~\mathrm{ms})$
		& $(2,\,-4.507,\,11.195~\mathrm{ms})$
		& $(3,\,-4.469,\,18.809~\mathrm{ms})$
		& $(4,\,-4.638,\,23.256~\mathrm{ms})$ \\
		
		FM
		& $(1,\,-16.523,\,5.322~\mathrm{ms})$
		& $(2,\,-7.696,\,11.195~\mathrm{ms})$
		& $(3,\,-6.033,\,18.809~\mathrm{ms})$
		& $(4,\,-5.557,\,23.256~\mathrm{ms})$ \\
		
		CM
		& $(1,\,-6.267,\,5.322~\mathrm{ms})$
		& $(2,\,-6.271,\,11.195~\mathrm{ms})$
		& $(3,\,-6.352,\,18.809~\mathrm{ms})$
		& $(4,\,-6.321,\,23.256~\mathrm{ms})$ \\
		
		BBS
		& $(100,\,-15.786,\,634.742~\mathrm{ms})$
		& $(300,\,-9.964,\,1993.101~\mathrm{ms})$
		& $(500,\,-8.604,\,3078.699~\mathrm{ms})$
		& $(1000,\,-5.681,\,6206.203~\mathrm{ms})$ \\
		\bottomrule
	\end{tabular}
\end{table*}

\subsection{Discussion}\label{sec:discussion}
Although this work focuses on a ULA and single UE, the proposed average velocity-guided beam generation and flexible prompting are not inherently restricted to these settings. Extension to UPA or multi-antenna UE mainly requires adapting the beam-state representation, condition encoding, and feasibility mapping. Multi-user or hybrid beamforming further requires joint precoding states, UE-specific conditions, interference-aware objectives, and analog--digital hardware constraints. These extensions require dedicated training and validation and are left for future work.

\section{Conclusion}\label{Section5}
This paper proposed FBBS for fast GenSSBF under flexible probing budgets. By combining average-velocity learning with stochastic masking, a single model supports one- or few-step beam generation from variable-length RSRP observations and achieves a favorable gain--overhead--latency tradeoff, with mild brainstorming capturing most of the achievable gain. The results further demonstrate strong robustness to noisy and outdated prompts and meaningful zero-shot transfer to unseen sites. Future work will extend the framework to multi-user, multi-cell, and hybrid beamforming settings.

\newpage

\vspace{11pt}

\vfill

\end{document}